\DeclareMathAlphabet{\mathcal}{OMS}{cmsy}{m}{n}
\Crefname{subfigures}{figure}{figures}
\Crefname{subfigures}{Figure}{Figures}
\def\bra#1{\mathinner{\langle{#1}|}}
\def\ket#1{\mathinner{|{#1}\rangle}}
\newtheorem{theorem}{Theorem}
\newtheorem{corollary}[theorem]{Corollary}
\newtheorem{proposition}{Proposition}
\begin{document}

\title{Golden states in resource theory of superposition}

\author{H\"{u}seyin Talha \c{S}enya\c{s}a}
\email{senyasa@itu.edu.tr}
\affiliation{Department of Physics, \.{I}stanbul Technical University, 34469 Maslak, \.{I}stanbul, Turkey}

\author{G\"{o}khan Torun}
\email{gtorun@ku.edu.tr}
\affiliation{Department of Physics, Ko{\c{c}} University, 34450 Sariyer, \.{I}stanbul, Turkey}


\begin{abstract}
One central study that constitutes a major branch of quantum resource theory is the hierarchy of states. This provides a broad understanding of resourcefulness in certain tasks in terms of efficiency. Here, we investigate the maximal superposition states, i.e., golden states, of the resource theory of superposition. Golden states in the resource theory of coherence are very well established; however, it is a very challenging task for superposition due to the nonorthogonality of the basis states. We show that there are sets of inner product settings that admit a golden state in high-dimensional systems. We bridge the gap between the resource theory of superposition and coherence in the context of golden states by establishing a continuous relation by means of a Gram matrix. In addition, immediate corollaries of our framework provide a representation of maximal states which reduces to the maximal state of the coherence in the orthonormal limit of pure basis states.
\end{abstract}

\maketitle


\section{Introduction}

The superposition principle \cite{Dirac-Superposition} is a characteristic feature of quantum mechanics and of particular interest for quantum information theory. Clearly, as we have witnessed with seminal and far-reaching applications, this and other types of quantum properties such as entanglement \cite{Horodecki-QE}, coherence \cite{Baumgratz-Coherence}, imaginarity \cite{Hickey_2018}, and Bell nonlocality \cite{2014Bellnonlocality} could produce some remarkable differences between the two regimes --- quantum and classical --- in terms of how effectively the information processing tasks are achieved. Over the past two decades, this reasoning has provided an impetus to investigate the role of quantum effects as a resource in certain tasks, focusing on whether these lead to any operational advantages \cite{Winter120404,Regula-QRTs,Oszmaniec2019operational,Liu-OneShotRT,Takagi_OpAdvQR,RT-of-QC,Li-QuantifyingRCQC,Wu2021-Imaginarity}. In particular, their performance as a resource reflects the order of states in terms of resourcefulness, where such orderings can be characterized, for instance, by consistent resource measures \cite{Kuroiwa2020generalquantum} or by state conversion protocols \cite{Zhou2020SC}.

Questioning the existence of a golden state is one of the most notable problems of quantum resource theories (QRTs) in that it provides a practical perspective for the quantification of quantum resources. In simple terms, a \(d\)-dimensional maximal state is identified as a state that can be transformed into any other state in dimension \(d\) by means of the free operations. In this sense, a maximal state is a state at the top of the hierarchy of resourcefulness, having the greatest resource value and regarded as a (golden) unit of the resource. Such golden states are well defined in the resource theory of (bipartite) entanglement, \(\ket{\Psi_{d}}=({1}/{\sqrt{d}}) \sum_{i=1}^{d}\ket{ii}\) \cite{Horodecki-QE} (see \cite{Tejada2019MRE} for multipartite systems); coherence, \(\ket{\Psi_{d}}=({1}/{\sqrt{d}}) \sum_{i=1}^{d}\ket{i}\) \cite{Baumgratz-Coherence,Peng2016MCS}; and imaginarity, \(\ket{\hat{\mp}}=(\ket{0} \mp i\ket{1})/\sqrt{2}\) \cite{Hickey_2018}. Recently, in investigating the general properties of QRTs, the authors of Ref.~\cite{Kuroiwa2020generalquantum} emphasized that the compactness of the set of free operations is an essential prerequisite that indicates the existence of maximally resourceful states. Although they further proved that a type of maximally resourceful state exists in the general QRTs satisfying axioms (see \cite{Kuroiwa2020generalquantum}, Sec. 2.2, for details) and the compactness of the set of states, they mentioned that the existence of a maximal state in QRTs is not obvious in general. A paucity of information also exists  for superposition (of nonorthogonal basis states); our paper describes an approach to adress this.

While coherence and superposition are often considered the same, the latter is a generalization of the former in the context of resource theory. A rigorous resource theory framework for the quantification of superposition \cite{Aberg-Superposition} of a finite number of linear independent states which are not necessarily orthogonal was introduced in Ref.~\cite{Plenio-RTofS}. In addition to determining the quantum operations which do not create superposition, a partial order structure was also uncovered for the probabilistic transformations of pure superposition states \cite{Plenio-RTofS}. However, more research is needed to further explore the order relations between superposition states. On the other hand, it is a hugely challenging task for superposition, especially for those states which are maximal, due to the nonorthogonality of basis states. It is important to note that the (non)existence of a maximal state for a given set of linearly independent basis states does not necessarily imply the same for a different set. Therefore, any attempt to determine maximal states for superposition must take into account all possible linearly independent basis states, i.e., all inner product settings.

The goal of this work is to develop a structural way which comprehensively addresses the maximal states of the resource theory of superposition (RTS). To this end, we show that the Gram matrix has a crucial detail, hidden in its eigensystem, for the realization of the potential candidates for maximal states. Namely, by contemplating the eigenvalues \(\{\lambda_i\}\) and their corresponding eigenvectors \(\{\ket{\lambda_i}\}\) of the Gram matrix, we show that the eigenvector corresponding to the minimum eigenvalue \(\lambda_{\min}\) mirrors the maximal states. We illustrate our results by examining two- and three-dimensional systems together with the general case. Note that the manipulation of superposition states was studied recently in Ref.~\cite{torun2020resource}. There the maximal states for superposition were also discussed; however, the study considered a subset of initial and final state pairs, that is, states with real components and real inner products. Here we deal with this problem in its entirety by means of the Gram matrix for a general case that involves all possible inner product settings. We believe that our work bridges the gap between the resource theory of coherence \cite{Baumgratz-Coherence} and the superposition \cite{Plenio-RTofS} in the context of maximal states, where both theories are equivalent in the orthonormal case of basis vectors.

Golden states are the preeminent resources that are used in various quantum information processing, guaranteeing the optimal success probability. The teleportation protocol \cite{Bennett1993} can be regarded as one of the most prominent examples that demonstrates this point clearly. In its simplest form, a two-qubit maximally entangled state \((\ket{00}+\ket{11})/{\sqrt{2}}\) is used to perfectly teleport a quantum state between two distant parties \cite{Bennett1993,2019QTeleportationLuo}. In the case where the shared state is \(a\ket{00}+b\ket{11}\) (\(|a|>|b|>0\)), the protocol can be achieved with \(p=2|b|^2<1\) probability (i.e., not perfectly). Of course, examples that can be given in this context are not limited to teleportation. Examples of the tasks include quantum metrology \cite{Toth2014QM}, quantum communication \cite{Pirandola2017}, and cryptography \cite{Pereira2021Cryp}. Moreover, maximal states can form an effective bridge between different types of resource theories. For instance, it was shown \cite{Streltsov2018MaxC} that any amount of purity can be converted into coherence via a suitable unitary operation; therefore, purity can be identified as the maximal coherence, concluding that the resource theories of coherence and purity are closely connected. All these studies (and many more) support the priority of a maximal state among all resource states indeed and motivate further studies.

Here we look at the existence of golden states in RTS and reveal that a unique state of this type,  which is a much stronger property, exists for any dimension \(d \geq 2\) for various inner product settings. It should be said that our proposed \(d\)-dimensional maximal superposition state(s) can be used to generate all other \(d\)-dimensional states deterministically by means of superposition-free operations. A recent study by Liu \textit{et al.} \cite{Liu-OneShotRT} characterized the resourcefulness of quantum states defined in general QRTs in terms of direct one-shot resource conversion. The authors introduced the notion of max-resource state (in the strongest sense), namely \textit{root state}, which is relevant to the problem we are interested in; by definition, a root state must achieve the maximum value of any resource monotone in dimension \(d\) \cite{Liu-OneShotRT}. By discussing the constant trace condition introduced in Ref.~\cite{Liu-OneShotRT} as the overlap between the maximal state and free states, we show that it can be derived in terms of the minimum eigenvalue \(\lambda_{\min}\) of the constructed Gram matrix. We discuss our results with explicit examples.

The paper proceeds as follows. We begin in Sec.~\ref{Sec:Definitions} by reviewing the basics of the resource theory of superposition. Moreover, we introduce the Gram matrix with its relevance to the superposition of a finite number of linear independent states for arbitrary finite dimension and finalize this section by defining a map \(\Phi(\cdot)\) by means of certain types of superposition-free Kraus operators. In Sec.~\ref{Sec:ANecCon} we provide a necessary condition with respect to the map \(\Phi(\cdot)\) and discuss the possibility of the existence of maximal states for an arbitrary inner product setting. Section \ref{Sec:MaxRStates} details our results by examining particular cases together with the general case. Section \ref{Sec:Conc} summarizes our work.


\section{Definitions and Preliminaries}\label{Sec:Definitions}

In this section, we review basic definitions of the resource theory of superposition and give further useful information and tools needed to construct the rest of the paper.

The general structure of QRTs \cite{Chitambar-QRTs} is shaped around three notions: {\it free states} (whose set is denoted by \(\mathcal{F}\)), {\it resource states} (\(\mathcal{R}\)), and {\it free operations} (\(\mathcal{FO}\)). Essentially, resource states \(\mathcal{R}\) cannot be created from a set of free states \(\mathcal{F}\) under the actions of free operations \(\mathcal{FO}\) of the given resource theory. To represent the states in RTS \cite{Plenio-RTofS}, suppose \(\{\ket{c_k}\}_{k=1}^{d}\) is a normalized, linearly independent, and not necessarily orthogonal basis of the Hilbert space represented by \(\mathds{C}^{d}\), \(d \in \mathds{N}\), where their inner products (i.e., overlaps) \(\braket{c_i}{c_j}=s_{ij} \in \mathds{C}\) for \(i \neq j\) in general. Then, states defined as \(\rho = \sum_k \rho_k \ketbra{c_k}\) are called superposition-free where \(\rho_k \geq 0\) form a probability distribution.
All density operators which are not an element of \(\mathcal{F}\) are called superposition states and form the set of resource states \(\mathcal{R}\), where a pure superposition state is given by
\begin{equation}\label{eq-ch2:superposition-state}
\ket{\psi} = \sum_{k=1}^d \psi_k\ket{c_k}, \quad \psi_k\in \mathds{C}.
\end{equation}
Naturally, to be able to discuss superposition, more than one \(\psi_k\) in Eq.~\eqref{eq-ch2:superposition-state} should be nonzero. After having defined the free states and resource states, the next step is to define the free operations \(\mathcal{FO}\) for the RTS \cite{Plenio-RTofS}. A Kraus operator $K_n$ is called superposition-free if $K_n \rho K_n^{\dagger}$ $\in$ $\mathcal{F}$  for all $\rho \in \mathcal{F}$ and is of the form
\begin{eqnarray}\label{Kraus-free-n}
	K_n=\sum_{k} c_{k,n} {\ket{c_{f_{n}(k)}}\bra{c_k^{\perp}}},
\end{eqnarray}
where $c_{k,n}\in \mathds{C}$, ${f_{n}(k)}$ are arbitrary index functions~\cite{Plenio-RTofS}, and $\braket{c_i^{\perp}}{c_j}=\delta_{ij}$ where the vectors $\ket{c_k^{\perp}}$ are not normalized. A quantum operation \(\Phi(\cdot)\) is called superposition-free if it is trace preserving and can be written such that \(\Phi(\rho) = \sum_{i} K_i \rho K_i^{\dagger}\), where all \(K_i\) are free.

For a set of basis states \(\{\ket{c_k}\}_{k=1}^{d}\), an inner product setting can be represented in terms of the Gram matrix. The Gram matrix is a positive-definite matrix whose entries correspond to the inner products of the basis states, that is \(G_{ij} = \braket{c_i}{c_j}\) \ \cite{Horn-GramMatrix}. We note that Gram matrices are in general positive semidefinite; however, the Gram matrix is positive definite for the current problem since the given basis states are linearly independent. The inner product of two states, e.g., the state given in Eq.~\eqref{eq-ch2:superposition-state} and the state \(\ket{\phi} = \sum_{k=1}^{d} \phi_k\ket{c_k}\) (\(\phi_k \in \mathds{C}\)), reads
\begin{equation}\label{eq:dlevel-inner}
   \braket{\phi}{\psi} = \sum_{i=1}^{d}\sum_{j=1}^d \phi_i^{*} \psi_j \braket{c_i}{c_j}.
\end{equation}
The above equation can also be expressed in terms of the Gram matrix. By defining column vectors
\begin{equation}\label{vector-forms}
\vec{\psi} \coloneqq {\left(\psi_1, \dots, \psi_d \right)}^{\intercal}, \quad
\vec{\phi} \coloneqq {\left(\phi_1, \dots, \phi_d \right)}^{\intercal},
\end{equation}
for the states \(\ket{\psi}\) and \(\ket{\phi}\), respectively, then the inner product \(\braket{\phi}{\psi}\) given in Eq.~\eqref{eq:dlevel-inner} can be written as
\begin{equation}
    \braket{\phi}{\psi} = \vec{\phi}^{\dagger} G \vec{\psi}.
\end{equation}
Obviously, the normalization condition equals \( \vec{\psi}^{\dagger} G \vec{\psi} = 1\) for an arbitrary vector \(\vec{\psi}\). This representation has the advantage of embedding all the information about the basis states into the Gram matrix. Since a Gram matrix \(G\) is a positive-definite matrix, the normalization condition represents a (hyper)-ellipsoid equation and reduces to a (hyper)sphere in the orthonormal limit.

The following theorem --- \textit{Rayleigh Quotient} --- can be used to determine the upper and lower bounds of a quadratic equation under elliptic constraint. We will use it to prove the necessary condition proposed in Sec.~\ref{Sec:ANecCon}.

Let \(\vec{v}\) and \(\vec{u}\) be eigenvectors of a Gram matrix \(G\) corresponding to the minimum eigenvalue \(\lambda_{\min}\) and maximum eigenvalue \(\lambda_{\max}\), respectively. Let \(\vec{x}\) be an arbitrary vector under elliptic constraint represented by \(\vec{x}^{\dagger} G \vec{x} = 1\). Then, for these three vectors \(\{\vec{v}, \vec{u}, \vec{x}\}\) the Rayleigh quotient guarantees that the following inequality exists:
\begin{equation}\label{Rmin-max}
R(G, \vec{v}) = \lambda_{\min} \leq R(G, \vec{x}) \leq R(G, \vec{u}) = \lambda_{\max}.
\end{equation}
Here \(R(G, \vec{x}) = \vec{x}^{\dagger} G \vec{x}/\vec{x}^{\dagger}\vec{x}\) is the Rayleigh quotient for \(G\) and \(\vec{x}\). Then, a quadratic expression in the form \(\vec{x}^{\dagger} \vec{x}\) has an upper and a lower bound given such that
\begin{equation}\label{eq:quadratic-interval}
    \frac{1}{\lambda_{\max}} \leq \sum_i \abs{x_i}^2 \leq \frac{1}{\lambda_{\min}}
\end{equation}
(see Theorem 4.2.2 in Chap. 4 of Ref.~\cite{Horn-GramMatrix}).
Using the fact that \(\lambda_{\max} \geq \lambda_{\min} >0\), one can obtain
\begin{equation}\label{eq:quadratic-identity}
    1 - \frac{\lambda_{\min}}{\lambda_{\max}} \geq 1 - \lambda_{\min}\sum_i \abs{x_i}^2 \geq 0.
\end{equation}

We will use two sets of superposition-free Kraus operators, where they play a part in searching for maximal states as follows. Each operator in the first set \(S_1\) transforms an initial state \(\ket{\psi}\) into another state \(\ket{\phi}\) with a probability \(p_n\geq 0\) (\(\sum_{n}p_n=1\)), that is, \(K_n\ket{\psi} = \sqrt{p_n}\ket{\phi}\). On the other hand, each operator in the second set \(S_2\) maps the initial state to zero, that is, \(K_m\ket{\psi} = 0\). Then, the map can be written such that
\begin{equation}\label{eq:supfreemap-implicit}
    \Phi(\rho) = \sum_{S_1 + S_2} K_i \rho K_i^{\dagger}=\ket{\phi}\bra{\phi},
\end{equation}
where \(\rho=\ket{\psi}\bra{\psi}\).
Also, when we consider mixed states, we will use convex combinations of the map \(\Phi (\cdot)\) given by Eq.~\eqref{eq:supfreemap-implicit}, i.e., \(\sum_k p_k \Phi_k(\cdot)\). In the following, we will introduce the explicit forms of the operators in \(S_1\) and \(S_2\).

For understandable reasons, we will do (most parts of) our calculations with a vector picture.
Now consider an arbitrary superposition-free Kraus operator in the form of \eqref{Kraus-free-n} acting on a state \eqref{eq-ch2:superposition-state} such that
\begin{equation}
\begin{split}
K_n\ket{\psi} &= \Big(\sum_{k=1}^{d} c_{k,n} {\ket{c_{f_{n}(k)}}\bra{c_k^{\perp}}}\Big) \sum_{i=1}^d \psi_i\ket{c_i} \\ &= c_{1,n} \psi_1 \ket{c_{f_n(1)}} + \cdots + c_{d,n} \psi_d \ket{c_{f_n(d)}}  \\
&= \sqrt{p_n} \sum_{k=1}^d \phi_k\ket{c_k},
\end{split}
\end{equation}
where \(p_n \geq 0\). Since \(\braket{c_i^{\perp}}{c_j} = \delta_{ij}\),  the equation above can be written as a matrix multiplication. By this we mean that
\begin{equation}
(\tilde{K}_n \vec{\psi})_{i}=\sum_{j=1}^{d} (\tilde{K}_n)_{ij}\psi_j = p_n \phi_i.
\end{equation}
Here, \((\tilde{K}_n)_{ij}\) is the \((i,j)\)th entry of the matrix \(\tilde{K}_n\) and \(\phi_i\) is the \(i\)th element of the vector \(\vec{\phi}\) defined in Eq.~\eqref{vector-forms}. Note that the matrix \(\tilde{K}_n\) has at most one nonzero element in each column for all \(n\). For instance, let \(\{f_1(k)\}_{k=1}^{3} = \{1, 2, 3\}\) for a three-level system; then we have
\begin{equation}
    \tilde{K}_1
    = p_1
    \begin{pmatrix}
       \phi_1/\psi_1 & 0 & 0 \\
        0 & \phi_2/\psi_2 & 0 \\
        0 & 0 & \phi_3/\psi_3 \\
    \end{pmatrix}.
\end{equation}
Obviously, the matrix itself is in the form of an incoherent Kraus operator --- a class of free operations in the resource theory of coherence (RTC) \cite{Baumgratz-Coherence}. On the other hand, it can only be treated as an incoherent Kraus operator when it is provided that \(G = \mathds{1}\), that is, in the orthonormal case. Finally, since we have \(\sum_n \bra{\psi}K_n^{\dagger}K_n\ket{\psi} =  \sum_n (\vec{\psi}^{\dagger} \tilde{K}_n^{\dagger}) G (\tilde{K}_n \vec{\psi}) = \vec{\psi}^{\dagger} G \vec{\psi} = 1\) for all \(\vec{\psi}\), the condition to be a trace-preserving operation can be stated as
\begin{equation}
    \sum_n \tilde{K}_n^{\dagger} G \tilde{K}_n = G.
\end{equation}

Having introduced the matrix picture above, we can now define the operators in the sets \(S_1\) and \(S_2\). The operators in the set \(S_1\) are in the form
\begin{equation}\label{eq:firstsetop}
    \tilde{K}_n = P_n A_n
\end{equation}
for \(n=1, 2, \dots, d!\). Here, \(P_n\) is a permutation matrix and \(A_n\) is a square diagonal matrix.
Since \(A_n\) is a square diagonal matrix, the operators \(\tilde{K}_n\) in \(S_1\) are permutation matrices whose nonzero entries are replaced by the elements of \(A_n\). The operators in the set \(S_2\) are matrices whose rows are zero except the \(m\)th one, that is, \((\tilde{F}_m)_{ij}=(\tilde{F}_m)_{ij} \delta_{mi}\) for \(m=1, 2, \dots, d\). Then, the map \(\Phi(\cdot)\) defined in Eq.~\eqref{eq:supfreemap-implicit} involves \(d!+d\) superposition-free Kraus operators.

We close this section by using the following vector from Ref.~\cite{torun2020resource} to express the majorization condition in the following section and for future use:
\begin{equation}\label{eq:tildevector}
    \tilde{x} \coloneqq \text{diag}(x_1^{*}, \cdots, x_n^{*}) G {\left(x_1, \cdots, x_n\right)}^{\intercal}= {\left(\tilde{x}_1, \cdots, \tilde{x}_n\right)}^{\intercal},
\end{equation}
where \(\text{diag}(\cdot)\) is a diagonal matrix. Note that when \(G \rightarrow \mathds{1}\), i.e., in the orthonormal case, one has \(\tilde{\psi} = {(\abs{\psi_1}^2, \ldots, \abs{\psi_d}^2)}^{\intercal} \) and \(\tilde{\phi} = {(\abs{\phi_1}^2, \ldots, \abs{\phi_d}^2)}^{\intercal} \), which are the vectors subject to the majorization condition in RTC \cite{Nielsen-Maj,Du-CoherenceMaj}.

\section{Gram Matrix and Golden States}\label{Sec:ANecCon}

In this section, we show that a Gram matrix constructed by inner products between linearly independent basis states may shed light on the investigation of golden states. To be more precise, we uncover that the eigenvalues and the corresponding eigenvectors of a given Gram matrix and maximal states have an intrinsic connection, from which we propose a necessary condition for the existence of maximal superposition states. It is on this criterion that we concentrate our research on golden states.

Different methods enable us to examine whether there are maximal states in a given resource theory. One possible way of establishing the existence of maximal states is to employ the conditions on state transformations. The meaningful task of (superposition) state conversion aims to transform a state \(\ket{\psi}=\sum_{k=1}^d \psi_k\ket{c_k}\) into another state \(\ket{\phi}=\sum_{k=1}^d \phi_k\ket{c_k}\) via superposition-free operations, which can be achieved either deterministically or probabilistically.
The former case has been discussed in Ref.~\cite{torun2020resource}, and it has been shown that the reconstructed majorization conditions \cite{Bhatia} in a setting where all components of pure superposition states (\(\psi_k\) and \(\phi_k\)) and inner products \(\braket{c_i}{c_j}=s_{ij}\) of basis vectors are real can be expressed utilizing the Gram matrix.
Then the majorization conditions reads \(\tilde{\psi} = D \tilde{\phi}\) \cite{torun2020resource} where \(D\) is a doubly stochastic matrix [recall Eq.~\eqref{eq:tildevector}].

In the most general setting, the components of superposition states and inner products of the basis states are complex. In this case, the elements of the vectors \(\tilde{\psi}=(\tilde{\psi}_1, \dots, \tilde{\psi}_d)^{\intercal}\) (with \(\sum_{i=1}^{d}\tilde{\psi}_i = 1\)) and \(\tilde{\phi}=(\tilde{\phi}_1, \dots, \tilde{\phi}_d)^{\intercal}\) (with \(\sum_{i=1}^{d}\tilde{\phi}_i = 1\)) subject to the doubly stochastic matrix take complex values as well. Then, one cannot write elements of \(\tilde{\psi}\) and \(\tilde{\phi}\) in decreasing (or increasing) order to test whether the equation \(\tilde{\psi} = D \tilde{\phi}\) is satisfied or not.
Again, we stress that \(\tilde{\psi} = D \tilde{\phi}\) must hold \cite{torun2020resource} in the most general setting. Since it is particularly complicated to study the maximal states in the most general setting due to the reason stated above, one may work with superposition states having real components and real inner product values to fully utilize the majorization theory as in the case of Ref.~\cite{torun2020resource}. However, this means searching for maximal states in restricted regions, which has the possibility of overlooking possible maximal state candidates. Thus, for a comprehensive solution, one simply needs to resolve the most general setting.

Here, we implement a workaround to this problem in the light of the following way.
If one can build a doubly stochastic matrix \(D\) such that all its entries are \(1/d\), that is, \(D_{ij} = 1/d\), then it is clear that such a doubly stochastic matrix maps all vectors, including the ones with complex components, to \({\left(1/d, \hdots, 1/d\right)}^{\intercal}\). Therefore, the equation \(\tilde{\psi} = D \tilde{\phi}\) always holds given that the initial state satisfies \(\tilde{\psi}_i = 1/d\) for all \(i\) for an arbitrary final state. One point to be noted, the additional constraint, namely, the condition on the completeness \cite{torun2020resource}, adds a restriction to the set of inner product settings. As a necessary condition for a maximal state in the cases where the transformation is carried out by the map \(\Phi(\cdot)\) given in Eq.~\eqref{eq:supfreemap-implicit}, we have the following.


\begin{proposition}\label{Prop:1}
 Let \(G\) be a Gram matrix that represents the inner product settings of superposition constructed by a set of linearly independent basis states \(\{\ket{c_i}\}_{i=1}^{d}\). Then a maximal state \(\vec{\psi}\) is necessarily an eigenvector of \(G\) corresponding to the minimum eigenvalue \(\lambda_{\min}\) and the components of \(\vec{\psi}\) must satisfy \(\tilde{\psi}_i = 1/d\) for all \(i =1, 2, \dots, d\).
\end{proposition}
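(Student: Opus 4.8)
The plan is to start from the conversion-based characterization of maximality: the state $\vec{\psi}$ is maximal precisely when, for every target $\vec{\phi}$, the map $\Phi(\cdot)$ of Eq.~\eqref{eq:supfreemap-implicit} realizes $\ket{\psi}\bra{\psi}\mapsto\ket{\phi}\bra{\phi}$, which by the discussion preceding the statement is equivalent to the existence of a doubly stochastic $D$ satisfying $\tilde{\psi}=D\tilde{\phi}$ [recall Eq.~\eqref{eq:tildevector}]. I would extract the two asserted properties separately and then argue they are jointly forced. The first step is therefore purely a convertibility argument, and the second couples that conclusion to the spectral data of $G$ through the Rayleigh-quotient bounds.

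For the component condition, I would argue that $\tilde{\psi}_i=1/d$ is the only choice compatible with \emph{universal} convertibility. In the general complex setting the entries of $\tilde{\phi}$ need not be real or orderable, so the usual majorization test is unavailable, and the single matrix $D$ that solves $\tilde{\psi}=D\tilde{\phi}$ simultaneously for all admissible $\tilde{\phi}$ is the flat one, $D_{ij}=1/d$, whose image is the fixed vector ${(1/d,\dots,1/d)}^{\intercal}$. This forces $\tilde{\psi}_i=1/d$ for all $i$, and I would check consistency against the identity $\sum_i\tilde{\psi}_i=\vec{\psi}^{\dagger}G\vec{\psi}=1$.

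For the spectral property I would feed $\tilde{\psi}_i=1/d$ into the Rayleigh-quotient machinery. Writing $\tilde{\psi}_i=\psi_i^{*}(G\vec{\psi})_i$, the condition reads $(G\vec{\psi})_i=\psi_i/(d\abs{\psi_i}^2)$, so $\vec{\psi}$ is an eigenvector exactly when $\abs{\psi_i}^2$ is independent of $i$; in that case $G\vec{\psi}=\lambda\vec{\psi}$ with $\sum_i\abs{\psi_i}^2=1/\lambda$. Comparing with Eq.~\eqref{eq:quadratic-interval}, the saturating value $1/\lambda_{\min}$ is the maximal admissible spread, and by the equality case of the Rayleigh quotient this selects both the eigenvector property and $\lambda=\lambda_{\min}$. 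I would then show that maximality forces this most-spread choice by invoking the completeness/trace-preservation constraint $\sum_n\tilde{K}_n^{\dagger}G\tilde{K}_n=G$ for the symmetric map assembled from all $d!$ permutation operators of $S_1$ together with the annihilating operators of $S_2$; this is what ties the constant overlap $\abs{\braket{c_k}{\psi}}^2=\lambda_{\min}/d$ to $\lambda_{\min}$ and rules out the competing maximal-eigenvalue solution, mirroring the constant-trace condition of Ref.~\cite{Liu-OneShotRT}.

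The main obstacle is exactly this final selection. The component condition $\tilde{\psi}_i=1/d$ is by itself too weak: already for $d=2$ both the minimal- and maximal-eigenvalue eigenvectors satisfy it, and for $d\ge 3$ it need not even force $\vec{\psi}$ to be an eigenvector at all. The crux is therefore to use the full structure of $\Phi(\cdot)$ — the simultaneous action of every permutation Kraus operator in $S_1$ and the annihilating operators in $S_2$, constrained by trace preservation — to upgrade $\tilde{\psi}_i=1/d$ to the eigenvector condition and to pin the eigenvalue to $\lambda_{\min}$ rather than $\lambda_{\max}$, with Eq.~\eqref{eq:quadratic-identity} supplying the inequality that closes the argument.
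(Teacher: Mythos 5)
Your overall scaffolding --- the map $\Phi(\cdot)$ of Eq.~\eqref{eq:supfreemap-implicit} built from all $d!$ permutation operators of $S_1$ plus the annihilating operators of $S_2$, trace preservation, and the Rayleigh quotient --- matches the paper's, and you correctly diagnose that the hard part is pinning $\vec{\psi}$ to the $\lambda_{\min}$-eigenspace. But the proposal stops at that diagnosis: the sentence ``I would then show that maximality forces this most-spread choice by invoking the completeness/trace-preservation constraint'' is precisely the step that needs an argument, and none is supplied. The paper's mechanism is concrete: writing out the diagonal of $\sum_n\tilde{K}_n^{\dagger}G\tilde{K}_n+\sum_m\tilde{F}_m^{\dagger}G\tilde{F}_m=G$ gives Eq.~\eqref{psii-phij-elements},
\begin{equation*}
\frac{1}{d}\sum_{j=1}^{d}\phi_j^{2}+\sum_{m=1}^{d}\abs{(\tilde{F}_m)_{mi}}^{2}\psi_i^{2}=\psi_i^{2},
\end{equation*}
and summing over $i$ yields $\sum_i\psi_i^2\geq\sum_j\phi_j^2$ for \emph{every} reachable target $\vec{\phi}$. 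Since a maximal state must reach every state, including those for which $\sum_j\phi_j^2$ attains its supremum $1/\lambda_{\min}$ from Eq.~\eqref{eq:quadratic-interval}, one needs $\sum_i\psi_i^2=1/\lambda_{\min}$, and the equality case of the Rayleigh quotient then forces $G\vec{\psi}=\lambda_{\min}\vec{\psi}$. That inequality-plus-saturation argument is the missing idea; without it your proposal never actually ``rules out the competing maximal-eigenvalue solution.''

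There is a second, smaller gap in your derivation of $\tilde{\psi}_i=1/d$. You argue that the only doubly stochastic $D$ solving $\tilde{\psi}=D\tilde{\phi}$ ``simultaneously for all admissible $\tilde{\phi}$'' is the flat one. But universal convertibility only requires that \emph{for each} target there exist \emph{some} $D$; nothing a priori forces a single $D$ to serve all targets, so flatness of $\tilde{\psi}$ does not follow from this reasoning. (This is the paper's motivating heuristic stated before the proposition, not its proof.) The paper instead extracts $\tilde{\psi}_i=1/d$ from the same trace-preservation identity: multiplying Eq.~\eqref{eq:tracecon-openform} by $e^{-i\theta_{i-l}}\psi_i\psi_l$, summing over $l$, and using $\tilde{F}_m\vec{\psi}=0$ collapses the right-hand side to $\frac{1}{d}\vec{\phi}^{\dagger}G\vec{\phi}=1/d$, which is exactly $\tilde{\psi}_i=1/d$ by Eq.~\eqref{eq:tildevector}. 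Both halves of the proposition therefore come out of one computation with the explicit Kraus operators, not out of the doubly-stochastic picture.
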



\begin{corollary}\label{Corollary:1}
The form of the (candidate) maximal state dictated by Proposition \ref{Prop:1} is then as follows
\begin{equation}\label{eq:maximalform-vec}
    \vec{\psi} = \sqrt{\frac{1}{d\lambda_{\min}}} {\left(e^{i\theta_1}, \dots, e^{i\theta_d}\right)}^{\intercal},
\end{equation}
and equivalently in the ``Dirac'' notation
\begin{equation}\label{eq:maximalform-braket}
    \ket{\psi} = \sqrt{\frac{1}{d\lambda_{\min}}} \sum_{j=1}^{d} {e^{i\theta_j}} \ket{c_j}.
\end{equation}
\end{corollary}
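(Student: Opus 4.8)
The plan is to obtain the explicit component form as an immediate algebraic consequence of the two conditions supplied by Proposition~\ref{Prop:1}, with no additional machinery required. First I would invoke the eigenvector condition directly: since \(\vec{\psi}\) is, by the proposition, an eigenvector of \(G\) associated with \(\lambda_{\min}\), we have \(G\vec{\psi} = \lambda_{\min}\vec{\psi}\), so that the \(i\)th component of \(G\vec{\psi}\) is simply \(\lambda_{\min}\psi_i\).

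Next I would substitute this relation into the definition of \(\tilde{\psi}\) in Eq.~\eqref{eq:tildevector}. Because that definition gives \(\tilde{\psi}_i = \psi_i^{*}(G\vec{\psi})_i\), the eigenvector relation collapses it to \(\tilde{\psi}_i = \lambda_{\min}\abs{\psi_i}^2\). Imposing the second condition of the proposition, \(\tilde{\psi}_i = 1/d\) for all \(i\), then forces \(\lambda_{\min}\abs{\psi_i}^2 = 1/d\), i.e.\ \(\abs{\psi_i} = \sqrt{1/(d\lambda_{\min})}\) for every \(i\). This is the crux: the two conditions together fix all the moduli to the common value \(\sqrt{1/(d\lambda_{\min})}\), leaving the \(d\) phases entirely unconstrained. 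Writing each component in polar form as \(\psi_i = \sqrt{1/(d\lambda_{\min})}\,e^{i\theta_i}\) with \(\theta_i\) arbitrary and reassembling the vector reproduces Eq.~\eqref{eq:maximalform-vec}, while expanding in the basis \(\{\ket{c_j}\}\) through Eq.~\eqref{eq-ch2:superposition-state} yields the Dirac form in Eq.~\eqref{eq:maximalform-braket}.

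Finally, as a consistency check I would confirm that this form is automatically normalized under the elliptic constraint: using the eigenvector relation once more, \(\vec{\psi}^{\dagger}G\vec{\psi} = \lambda_{\min}\vec{\psi}^{\dagger}\vec{\psi} = \lambda_{\min}\sum_i\abs{\psi_i}^2 = \lambda_{\min}\cdot d\cdot\tfrac{1}{d\lambda_{\min}} = 1\), so \(\vec{\psi}^{\dagger}G\vec{\psi}=1\) holds with no further restriction on the \(\theta_i\). Since the derivation is a single substitution, there is no genuine obstacle here; the only point deserving emphasis is that the maximal state is therefore a \(d\)-parameter phase family rather than a unique vector, and that in the orthonormal limit \(G\to\mathds{1}\), \(\lambda_{\min}\to 1\), it collapses to the equal-weight superposition that is exactly the golden state of the resource theory of coherence.
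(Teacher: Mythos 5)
Your derivation is correct and is essentially the paper's own argument: substituting the eigenvector relation \(G\vec{\psi}=\lambda_{\min}\vec{\psi}\) into \eqref{eq:tildevector} gives \(\tilde{\psi}_i=\lambda_{\min}\abs{\psi_i}^2\), and imposing \(\tilde{\psi}_i=1/d\) fixes every modulus to \(\sqrt{1/(d\lambda_{\min})}\), which is exactly how the text extracts the form \eqref{eq:maximalform-vec} after the proof of Proposition~\ref{Prop:1}. One caveat on your closing remark: the phases \(\theta_j\) are \emph{not} an unconstrained \(d\)-parameter family. They are pinned (up to a global phase) by the requirement that the vector actually be an eigenvector of \(G\) for \(\lambda_{\min}\) --- a nontrivial constraint whenever \(G\neq\mathds{1}\), as the qubit example \eqref{eq:d2Complexinner} shows, where the relative phase must compensate the phase of \(\braket{c_1}{c_2}\); the paper also stresses that these relative phases are fixed because phase operations are not free in RTS. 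The corollary only asserts the common modulus, so this does not invalidate your proof of the stated form, but the phase-freedom claim should be dropped.
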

As expected, the state given in Eq.~\eqref{eq:maximalform-braket} reduces to the maximal state of coherence \cite{Baumgratz-Coherence}, which is given by
\begin{equation}\label{eq:coherence-maximal}
    \ket{\Psi_d} = \frac{1}{\sqrt{d}} \sum_{j=1}^{d} e^{i\tau_j}\ket{j},
\end{equation}
where \(\{\ket{j}\}_{j=1}^d\) forms an orthonormal basis and \(\tau_j \in \mathds{R}\). In the orthonormal limit, we have \(G \rightarrow \mathds{1}\), and therefore \(\lambda_{\min} \rightarrow 1\). We note that in the RTC, one has the freedom to choose \(\tau_i\) arbitrarily since the states are equivalent to each other by means of incoherent unitary operations \cite{Baumgratz-Coherence}. However, relative phases of the maximal states of RTS are fixed due to the fact that operations acting on relative phases are not free \cite{Plenio-RTofS}.

Proposition \ref{Prop:1} is a compound statement where the first part establishes a relation between the Gram matrix (hence the given basis vectors) and the maximal states. The second part introduces the doubly stochastic matrix picture of the majorization theory \cite{Bhatia}. Therefore, the combination of those two parts provides a unified statement regarding the maximal states of the RTC \cite{Baumgratz-Coherence} and RTS \cite{Plenio-RTofS}.

\begin{proof}
Given an initial state \(\vec{\psi}=(e^{i\theta_1}\psi_1, \dots, e^{i\theta_d}\psi_d)^{\intercal}\)
and a final state \(\vec{\phi}=(e^{i\gamma_1}\phi_1, \dots, e^{i\gamma_d}\phi_d)^{\intercal}\),
where \(\theta_j , \gamma_j \in [0, 2\pi)\) and \(\psi_j , \phi_j \in \mathds{R}^{\geq 0}\), the transformation from the initial state to the final state is carried out by the operators in the first set \(S_1\) such that  \(\tilde{K}_n\vec{\psi} = \sqrt{p_n}\vec{\phi}\). Here, the probabilities are chosen to be \(p_n = 1/d!\) for \(n=1, 2, \dots, d!\) and the Kraus operators \(\tilde{K}_n\) have to be in the form (defining \(\epsilon_{ij} \coloneqq \gamma_i - \theta_j\))
\begin{equation}\label{eq:dlevel-rankd-kraus}
    {(\tilde{K}_{n})}_{ij} = \sqrt{\frac{1}{d!}} e^{i(\epsilon_{ij, n})} \frac{\phi_{i,n}}{\psi_{j,n}}
\end{equation}
for \(i,j=1, 2, \dots, d\) and \(n=1, 2, \dots, d!\). From Eq.~\eqref{eq:firstsetop}, it is clear that whether an entry of the matrix is zero or not depends on the index \(n\). The operators in the second set \(S_2\) are defined as \((\tilde{F}_m)_{ij}=(\tilde{F}_m)_{ij} \delta_{mi}\) for \(m=1, 2, \dots, d\).

Now, in the first part of the proof of Proposition \ref{Prop:1}, we show that a maximal state has to be an eigenvector of a Gram matrix corresponding to the minimum eigenvalue. Assume that \(\vec{\psi}\) is a maximal state and the map \(\Phi(\cdot)\) given in Eq.~\eqref{eq:supfreemap-implicit} is a superposition-free operation. Then, the Kraus operators belonging to the sets \(S_1\) and \(S_2\) must satisfy
\begin{equation}\label{eq:tracecon}
\sum_{n=1}^{d!} \tilde{K}_n^{\dagger} G \tilde{K}_n +\sum_{m=1}^{d}\tilde{F}_m^{\dagger} G \tilde{F}_m = G.
\end{equation}
We express Eq.~\eqref{eq:tracecon} in index notation to calculate the entries of the resulting matrix of the left-hand side. First, we focus on the operators in the set \(S_1\), that is
\begin{equation}\label{Set1KrausMap}
    {\bigg(\sum_{n=1}^{d!}\tilde{K}_n^{\dagger} G \tilde{K}_n\bigg)}_{il} = \sum_{n=1}^{d!} \left[ \sum_{j=1}^{d} \sum_{k=1}^{d} {(\tilde{K}^{\dagger}_n)}_{ij} G_{jk} {(\tilde{K}_n)}_{kl}  \right],
\end{equation}
for \(i,l=1, 2, \dots, d\). To shorten the expressions let us define \(\mathds{K} \coloneqq \sum_{n=1}^{d!}\tilde{K}_n^{\dagger} G \tilde{K}_n\). Combining Eq.~\eqref{eq:dlevel-rankd-kraus} with Eq.~\eqref{Set1KrausMap} using the fact that \({(\tilde{K}_n^{\dagger})}_{ij} = {(\tilde{K}_n^{*})}_{ji}\), the first \(d!\) terms read
\begin{equation}\label{Yil-1st}
    {\mathds{K}}_{il} = \sum_{n=1}^{d!} \left[\sum_{j=1}^{d} \sum_{k=1}^{d}\frac{1}{d!} e^{i(\epsilon_{kl, n}-\epsilon_{ji, n})} G_{jk} \frac{\phi_{j,n}}{\psi_{i,n}}\frac{\phi_{k, n}}{\psi_{l, n}} \right].
\end{equation}
Using the fact that \(\sum_j\sum_k a_j b_k = \sum_j a_j b_j + \sum_{j, k, (j \neq k)} a_j b_k\), Eq.~\eqref{Yil-1st} goes one step further. There, the index \(n\) dependence of the two sum terms on the right-hand side can be calculated by invoking the counting problem. As we mentioned before, an operator \(\tilde{K}_n\) defined in Eq.~\eqref{eq:firstsetop} has at most one nonzero entry in each column and row. Therefore, for \(i = l\), the sum over the index set \(j \neq k\) is zero for all \(n\) and the term \(\phi_j/\psi_i\) repeats \(d!/d\) times for each \(j\). Hence, the diagonal entries are obtained as
\begin{equation}\label{KnSecondSet11}
    {\mathds{K}}_{ii}
    = \frac{1}{d\psi_{i}^2}\sum_{j=1}^{d} \phi_{j}^2.
\end{equation}
For \(i \neq l\) the sum over the index set \(j = k\) is zero for all \(n\) due to the exact reason stated above. Since \(i \neq l\) and \(j \neq k\), there exists a Kraus operator \(\tilde{K}_{n^{\prime}}\) with entries such that
\({(\tilde{K}_{n^{\prime}})}_{ij} =  e^{i\epsilon_{ij}}({\phi_i}/{\psi_j})\) and \({(\tilde{K}_{n^{\prime}})}_{kl} =  e^{i\epsilon_{kl}}({\phi_k}/{\psi_l})\) for an arbitrary index \(n^{\prime}\). Starting from \(n^{\prime}\), among numbers of  permutations \(d!\) of the \(n^{\prime}\)th Kraus operator, only \((d-2)!\) of those do not change the indices. Hence summing over the index \(n\) and using the shorthand notation \(\theta_{i-l} \equiv \theta_i - \theta_l\) and \(\gamma_{k-j} \equiv \gamma_k - \gamma_j\), one obtains
\begin{equation}\label{KnSecondSet12}
    {\mathds{K}}_{i \neq l} = \frac{(d-2)!}{d!} \frac{e^{i\theta_{i-l}}}{\psi_i \psi_l} \sum_{\substack{j,k \\ j \neq k}}^{d}  e^{i\gamma_{k - j}} G_{jk} \phi_{j}\phi_{k}.
\end{equation}
The second sum term of Eq.~\eqref{eq:tracecon} reads
\begin{equation}\label{FmSecondSet2}
    \begin{split}
    {\bigg(\sum_{m=1}^{d}\tilde{F}_m^{\dagger} G \tilde{F}_m \bigg)}_{il} &= \sum_{m=1}^{d} \left[\sum_{j,k}^{d} {(\tilde{F}_m)}_{ji}^{*} \delta_{mj} G_{jk} {(\tilde{F}_m)}_{kl} \delta_{mk} \right] \\ &= \sum_{m=1}^{d} {(\tilde{F}_m)}_{mi}^{*} {(\tilde{F}_m)}_{ml}.
    \end{split}
\end{equation}
After combining Eqs.~\eqref{KnSecondSet11}, \eqref{KnSecondSet12}, and \eqref{FmSecondSet2}, we obtain
\begin{equation}\label{eq:tracecon-openform}
    \begin{split}
    G_{il}= & \bigg[\frac{1}{d\psi_{i}^2}\sum_{j=1}^{d} \phi_{j}^2\bigg] \delta_{il} + \sum_{m=1}^{d} {(\tilde{F}_m)}_{mi}^{*} {(\tilde{F}_m)}_{ml} \\ &+ \bigg[\frac{(d-2)!}{d!} \frac{e^{i\theta_{i - l}}}{\psi_i \psi_l} \sum_{\substack{j,k \\ j \neq k}}^{d}  e^{i\gamma_{k - j}} G_{jk} \phi_{j}\phi_{k}\bigg] (1-\delta_{il}).
    \end{split}
\end{equation}
On the one hand, from this form of Eq.~\eqref{eq:tracecon-openform} it is arduous to reveal the form of the maximal state(s). On the other hand, we can employ Eq.~\eqref{eq:tracecon-openform} to obtain a condition so that we can invoke the \textit{Rayleigh Quotient} theorem introduced in Sec.~\ref{Sec:Definitions}. Now consider the case \(l = i\) in \eqref{eq:tracecon-openform}, which gives
\begin{equation}\label{psii-phij-elements}
\frac{1}{d}\sum_{j=1}^{d} \phi_{j}^2 + \sum_{m=1}^{d} \abs{{(\tilde{F}_m)}_{mi}}^{2}\psi_{i}^2 = \psi_{i}^2,
\end{equation}
for \(i=1, 2, \dots, d\). By summing over the index \(i\), one can obtain the inequality
\begin{equation}
    \sum_{i=1}^{d}\psi_i^2 \geq \sum_{j=1}^{d}\phi_j^2.
\end{equation}
It is obvious that a pair of states that do not provide this relation cannot satisfy each component of Eq.~\eqref{psii-phij-elements} hence Eq.~\eqref{eq:tracecon} simultaneously. The expressions on both sides are quadratic expressions in the same form constrained by the normalization condition; therefore, they have an upper and a lower bound. For a maximal state, the left-hand side has to take the maximum value so that it always holds for an arbitrary final state. By invoking the \textit{Rayleigh Quotient}, one concludes that a maximal state is necessarily an eigenvector of the Gram matrix corresponding to the minimum eigenvalue. This completes the first part of the proof.

To provide the second part of the proof of Proposition \ref{Prop:1}, we return to Eq.~\eqref{eq:tracecon-openform}. Since there is no sum term over the indices \(i\) and \(l\), we can multiply Eq.~\eqref{eq:tracecon-openform} by
\(e^{-i\theta_{i-l}}\psi_i\psi_l\). Then, using the fact that \({e^{-i\theta_{i-l}}}(\psi_i\psi_l/\psi_i^2)\delta_{il} = \delta_{il}\) and summing over the index \(l\), Eq.~\eqref{eq:tracecon-openform} becomes
\begin{equation}\label{1dprooflaststep}
\begin{split}
    \sum_{l=1}^{d} \frac{G_{il} \psi_i \psi_l}{e^{i\theta_{i-l}}} =&
    \bigg[\frac{1}{d} \sum_{j=1}^{d} \sum_{l=1}^{d} \phi_{j}^2\bigg] \delta_{il} \\
    & + \bigg[\frac{(d-2)!}{d!} \sum_{l=1}^{d} \sum_{\substack{j,k \\ j \neq k}}^{d}  e^{i\gamma_{k-j}} G_{jk} \phi_{j}\phi_{k}\bigg] (1-\delta_{il}) \\
    & + \sum_{m=1}^{d} \sum_{l=1}^{d} {(\tilde{F}_m)}_{mi}^{*} {(\tilde{F}_m)}_{ml} e^{-i\theta_{i-l}}\psi_i \psi_l.
\end{split}
\end{equation}
To further simplify \eqref{1dprooflaststep}, recall that each operator in the set \(S_2\) maps the initial state to zero, that is, \(\tilde{F}_m\vec{\psi} =  \sum_{l=1}^{d} {(\tilde{F}_m)}_{ml}e^{i\theta_l}\psi_l = 0\). Thus, the third term on the right-hand side is zero. We can then write \eqref{1dprooflaststep} as
\begin{equation}
    \begin{split}
    \sum_{l=1}^{d} G_{il} e^{-i\theta_{i-l}}\psi_i \psi_l
    &= \frac{1}{d} \bigg[\sum_{j=1}^{d} \phi_{j}^2 + \sum_{\substack{j,k \\ j \neq k}}^{d}  e^{i\gamma_{k-j}} G_{jk} \phi_{j}\phi_{k}\bigg] \\
    &= \frac{1}{d} \vec{\phi}^{\dagger} G \vec{\phi}.
    \end{split}
\end{equation}
Then, it is obvious that one is required to satisfy
\begin{equation}\label{eq:dlevel-maj}
\sum_{l=1}^{d} e^{-i\theta_{i-l}} G_{il} \psi_i \psi_l = \tilde{\psi}_i = \frac{1}{d},
\end{equation}
for \(i=1, 2, \dots, d\), where the left-hand side is the vector defined in Eq.~\eqref{eq:tildevector} in index notation form. This concludes the second part of the proof of Proposition \ref{Prop:1}.
\end{proof}

In summary, we have proposed that a maximal state is necessarily an eigenvector corresponding to the minimum eigenvalue of the Gram matrix, and must satisfy Eq.~\eqref{eq:dlevel-maj}. For a clear view of how these conditions dictate the form of a maximal state, assume that \(G \vec{\psi} = \lambda_{\text{min}}\vec{\psi}\) and \(\tilde{\psi} = {(\tilde{\psi}_1, \dots,  \tilde{\psi}_d)}^{\intercal} = {(1/d, \dots, 1/d)}^{\intercal}\). Then, it follows that \(\text{diag}(\psi_1^{*}, \dots, \psi_d^{*})G \vec{\psi} = \lambda_{\text{min}} {(\abs{\psi_1}^2, \dots, \abs{\psi_d}^2)}^{\intercal}\). Therefore, we have \(\tilde{\psi}_i = \lambda_{\text{min}} \abs{\psi_i}^2\), which dictates \(\abs{\psi_i} = \abs{\psi_j}\) for all \(i, j\).

Proposition \ref{Prop:1} considers the superposition-free state transformations that are carried out by a subset of all types of superposition-free Kraus operators. Therefore, other types of operators that are not included in the set \(S_1\) and the set \(S_2\) may provide additional constraints. However, consider a state transformation for a three-dimensional system, where \(\vec{\psi}\) and \(\vec{\phi}\) are initial and final states, respectively. Assume that \(\psi_i \neq 0\) and \(\phi_i \neq 0\) for all \(i\) (i.e., both states have the same superposition rank \cite{Plenio-RTofS}, \(r_S(\vec{\psi})=r_S(\vec{\phi})=3\)). Then it is necessary to use the operators in the first set \(S_1\) in a way that \(\tilde{K}_n \vec{\psi} = \sqrt{p_n}\vec{\phi}\). For any other type of superposition-free Kraus operators, one needs \({M}_m \vec{\psi} = 0\), since (matrix) \(\rank({M}_m) < 3\) for all \({M}_m \notin S_1\). Notably, numerical calculations indicate that for an initial state that satisfies the second part of Proposition \ref{Prop:1} (that is, \(\tilde{\psi}_i = 1/3\) for all \(i\)) but violates the first part of Proposition \ref{Prop:1} (that is, \(G\vec{\psi} = \lambda_{\min}\vec{\psi}\)), one will have \(\sum_{S_1} \tilde{K}_n^{\dagger}G\tilde{K}_n > G\). Then it is not possible to obtain a 
trace-preserving operation by introducing additional operators. Moreover, in the following section we will show that the states that satisfy Proposition \ref{Prop:1} are valid maximal states. Thus,  considering the above reasoning and the results below, we conjecture that Proposition \ref{Prop:1} is a necessary and sufficient condition.

Finally, let us look at the \(l_1\)-norm and the relative entropy of coherence \cite{Baumgratz-Coherence} and superposition \cite{Plenio-RTofS}. It is easy to see that the \(l_1\)-norm of the maximal coherent state \(\ket{\Psi_{d}}=({1}/{\sqrt{d}}) \sum_{i=1}^{d}\ket{i}\) is equal to \(d-1\) and the relative entropy of coherence, which is given by \(C_{\text{rel.ent.}}(\rho) = \min_{\sigma \in \mathcal{F}} S(\rho||\sigma)\), is equal to \(\ln(d)\) for \(\ket{\Psi_{d}}\). Note that \(C_{\text{rel.ent.}}(\rho) = S(\rho_{\text{diag}}) - S(\rho)\), where \(S\) is the von Neumann entropy and \(\rho_{\text{diag}}\) denotes the state obtained from \(\rho\) by deleting all off-diagonal elements \cite{Baumgratz-Coherence}. When we consider the superposition, we report that the upper bounds for the \(l_1\)-norm and the relative entropy are \((d-1)/\lambda_{\min}\) and \(\ln(d/\lambda_{\min})\), respectively. These upper bounds are attained for the maximal states introduced in the following section. As expected, \(\lambda_{\min}=1\) in the orthonormal limit of basis vectors and thus coherence and superposition have the same results.

\section{maximal superposition states}\label{Sec:MaxRStates}

\subsection{Qubit systems}

The previous works \cite{Plenio-RTofS,torun2020resource} regarding the existence of the maximal states of a qubit system considered only real inner product settings, that is, \(\braket{c_1}{c_2} \in \mathds{R}\). However, the following proposition encapsulates all possible inner product settings including complex-valued inner products, which therefore combines and generalizes the previous works \cite{Plenio-RTofS,torun2020resource}.
\begin{proposition}\label{Prop:2}
    Let \(G\) be a Gram matrix that represents the inner product settings of superposition constructed by a set of linearly independent basis states \(\{\ket{c_1}, \ket{c_2}\}\). The eigenvector \(\vec{\psi}\) corresponding to the minimum eigenvalue \(\lambda_{\min}\) is then a maximal state.
\end{proposition}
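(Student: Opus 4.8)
The plan is to establish the converse of Proposition~\ref{Prop:1} in the qubit case, i.e.\ that the candidate dictated by Corollary~\ref{Corollary:1} is not merely necessary but genuinely maximal. Writing the $2\times2$ Gram matrix through the overlap $s=\braket{c_1}{c_2}=|s|e^{i\varphi}$, its eigenvalues are $1\pm|s|$, so $\lambda_{\min}=1-|s|$ and the candidate is $\vec\psi=\sqrt{1/(2\lambda_{\min})}\,(e^{i\theta_1},e^{i\theta_2})^{\intercal}$ as in \eqref{eq:maximalform-vec}. To prove maximality I would show that for an \emph{arbitrary} normalized target $\vec\phi$ (with $\vec\phi^{\dagger}G\vec\phi=1$) one can assemble the superposition-free map $\Phi(\cdot)$ of \eqref{eq:supfreemap-implicit} out of the $d!=2$ operators of $S_1$ and the $d=2$ operators of $S_2$ so that $\Phi(\ketbra{\psi})=\ketbra{\phi}$. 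Since $\vec\phi$ is arbitrary, this makes $\vec\psi$ a golden state.

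First I would fix the output. Choosing $p_n=1/2$ and the entries \eqref{eq:dlevel-rankd-kraus}, each $S_1$ operator satisfies $\tilde K_n\vec\psi=\sqrt{p_n}\vec\phi$, so $\sum_{n=1}^{2}\tilde K_n\ketbra{\psi}\tilde K_n^{\dagger}=\ketbra{\phi}$ holds automatically, and the $S_2$ operators, which annihilate $\vec\psi$, do not spoil the image. Hence the entire content lies in the trace-preservation condition \eqref{eq:tracecon}. I would phrase this through the deficit $\Delta\coloneqq G-\sum_{n=1}^{2}\tilde K_n^{\dagger}G\tilde K_n$, which must be reproduced as $\sum_{m=1}^{2}\tilde F_m^{\dagger}G\tilde F_m$. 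Because each $\tilde F_m$ has a single nonzero row and $G_{mm}=1$, the term $\tilde F_m^{\dagger}G\tilde F_m$ is a rank-one positive-semidefinite matrix whose range is fixed by that row, while $\tilde F_m\vec\psi=0$ forces the row to be orthogonal to $\vec\psi$. Thus the whole question reduces to showing that $\Delta$ is positive semidefinite, of rank at most one, and satisfies $\Delta\vec\psi=0$.

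The crux is verifying these facts about $\Delta$. Its diagonal is already available from \eqref{KnSecondSet11}: with $\psi_i^{2}=1/(2\lambda_{\min})$ one gets $\Delta_{ii}=1-\lambda_{\min}\sum_j\phi_j^{2}$, nonnegative by the Rayleigh bound $\sum_j\phi_j^{2}\le 1/\lambda_{\min}$ of \eqref{eq:quadratic-interval} and, for $d=2$, common to $i=1,2$; call it $\delta$. For the off-diagonal I would take $\Delta_{12}=G_{12}-\mathds{K}_{12}$ from \eqref{KnSecondSet12}, insert the phase relation $e^{i\theta_{1-2}}=-e^{i\varphi}$ forced by $G\vec\psi=\lambda_{\min}\vec\psi$, and then eliminate the magnitudes using the normalization $\vec\phi^{\dagger}G\vec\phi=1$. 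The expected outcome is the clean identity $|\Delta_{12}|=\delta$, whence $\det\Delta=\delta^{2}-|\Delta_{12}|^{2}=0$. Together with $\Delta_{ii}\ge0$ this gives $\Delta\succeq0$; since additionally $\vec\psi^{\dagger}\Delta\vec\psi=0$ (immediate from $\sum_n p_n=1$), positivity forces $\Delta\vec\psi=0$, so $\Delta$ has rank one.

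To finish I would write the rank-one $\Delta=\vec w\vec w^{\dagger}$ with $\vec w^{\dagger}\vec\psi=0$ and realize it with a single $S_2$ operator $\tilde F_1$ whose nonzero row equals $\overline{\vec w}$; this automatically satisfies $\tilde F_1\vec\psi=0$ and closes \eqref{eq:tracecon}, so $\Phi$ is a legitimate superposition-free operation sending $\vec\psi$ to $\vec\phi$. The degenerate cases ($|s|=0$, which reduces to coherence, or a target with a vanishing component) are handled by inspection. I expect the main obstacle to be precisely the off-diagonal identity $|\Delta_{12}|=\delta$: one must check that the phase carried by the eigenvector ($e^{i\theta_{1-2}}=-e^{i\varphi}$), the target's relative phase $e^{i(\gamma_2-\gamma_1)}$, and the complex overlap $s$ conspire so that the normalization of $\vec\phi$ collapses $\Delta$ to rank one for \emph{every} target at once. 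This is the step where the general complex inner-product setting, rather than the real one of Refs.~\cite{Plenio-RTofS,torun2020resource}, genuinely enters.
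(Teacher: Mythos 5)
Your plan is correct and follows essentially the same route as the paper, which proves Proposition~\ref{Prop:2} inside the proof of Proposition~\ref{Prop:3}: show \((G-\mathds{K})\vec{\psi}=0\) and that the deficit \(G-\mathds{K}\) is positive semidefinite with \(\abs{(G-\mathds{K})_{12}}=(G-\mathds{K})_{11}=1-\lambda_{\min}\phi^2\) --- the identity you flag as the crux, which does hold once the eigenvector condition \(G_{12}=(\lambda_{\min}-1)e^{i\theta_{1-2}}\) and the normalization \(\vec{\phi}^{\dagger}G\vec{\phi}=1\) are inserted into Eq.~\eqref{KnSecondSet12}. The only differences are cosmetic: you certify positivity via \(\det\Delta=0\) and build the single \(S_2\) operator explicitly from the rank-one factorization \(\Delta=\vec{w}\vec{w}^{\dagger}\), whereas the paper uses the diagonally-dominant-Hermitian criterion (equivalent for a \(2\times2\) matrix) and invokes Theorem 7 of Ref.~\cite{Plenio-RTofS} to avoid constructing the completion, adding the one-line convexity argument for mixed targets that you omit.
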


\begin{proof}
    For a proof of Proposition \ref{Prop:2}, see the proof of Proposition \ref{Prop:3}.
\end{proof}

\begin{figure}[b]
\centering
\includegraphics[width=1\columnwidth]{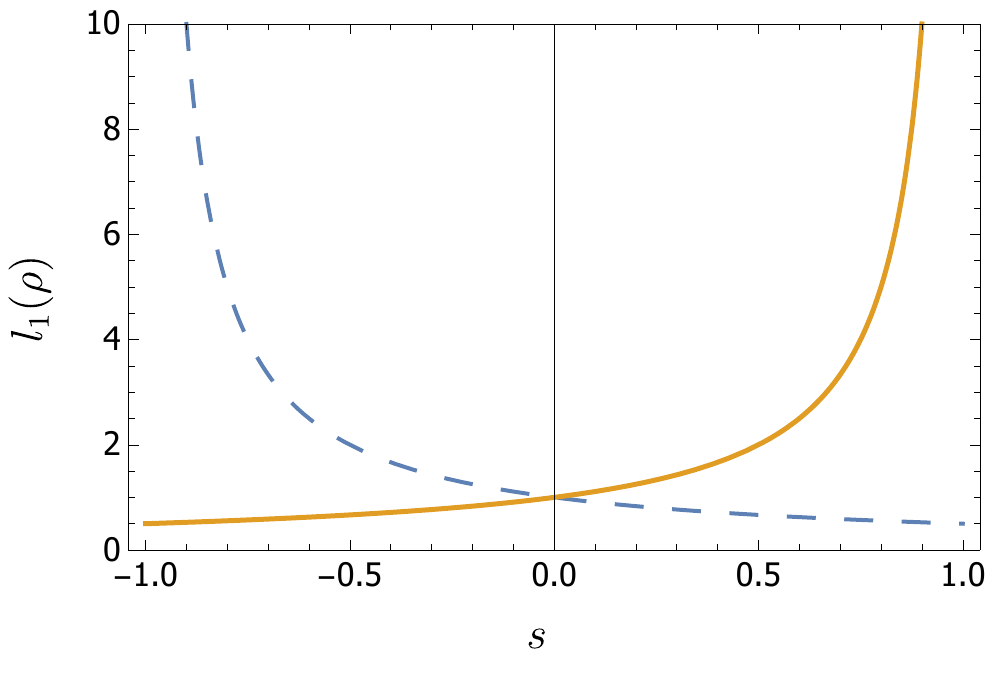}
\caption{Plots of the \(l_1\) norm of superposition of the states \(\ket{\Psi_{2}^{-}}\) and \(\ket{\Psi_{2}^{+}}\) given in Eqs.~\eqref{eq:twolevel-maximal-negative} and \eqref{eq:twolevel-maximal-positive}.
The \(l_1\) norm of superposition of the state \(\ket{\Psi_{2}^{-}}\) is \(l_1(\rho_{\ket{\Psi_{2}^{-}}})={1}/{(1-s)}\) (the solid orange line) and the \(l_1\) norm of superposition of the state \(\ket{\Psi_{2}^{+}}\) is \(l_1(\rho_{\ket{\Psi_{2}^{+}}})={1}/{(1+s)}\) (the dashed blue line). We also recall that the \(l_1\) norm of coherence of the maximally coherent state \(\ket{\Psi_{2}}=(\ket{0}+\ket{1})/\sqrt{2}\) is equal to  \({1}\) for \(d=2\), where \(l_1(\rho_{\ket{\Psi_{2}^{-}}})=l_1(\rho_{\ket{\Psi_{2}^{+}}})=1\) for \(s=0\). For any given state \(\ket{\phi}=\phi_1\ket{c_1}+\phi_2\ket{c_2}\) with \(\braket{c_1}{c_2}=s\) in dimension two, we have the following results: \(l_1(\rho_{\ket{\Psi_{2}^{-}}}) > l_1(\rho_{\ket{\phi}})\) for \(s\in[0, 1)\) and  \(l_1(\rho_{\ket{\Psi_{2}^{+}}}) > l_1(\rho_{\ket{\phi}})\) for \(s\in(-1, 0]\).}
\label{Fig1:2D}
\end{figure}

We first deal with the known examples. Let us consider the case  where the inner product of the basis states is chosen to be real, that is, \(\braket{c_1}{c_2}=s\in(-1, 1)\). The Gram matrix then has two eigenvalues \(\lambda_1=1-s\) and \(\lambda_2=1+s\) with the corresponding eigenvectors \( \vec{x}_1 = {(1, -1)}^{\intercal} \) and  \( \vec{x}_2 = {(1, 1)}^{\intercal} \).
Obviously, two different situations occur: \(\lambda_{\text{min}}=1-s\) when \(s\in[0,1)\) and \(\lambda_{\text{min}}=1+s\) when \(s\in(-1,0]\). In understanding what kind of consequences appear, it is necessary to discuss the maximally resourceful superposition states separately for these two cases. Then, the state for the former case can be written as
\begin{equation}\label{eq:twolevel-maximal-negative}
\ket{\Psi_{2}^{-}} = \frac{1}{\sqrt{2(1-s)}}\Big(\ket{c_1} - \ket{c_2}\Big), \quad s\in[0, 1),
\end{equation}
after the normalization \(\vec{x}_1^{\dagger} G \vec{x}_1 = 1\). It is possible to convert \(\ket{\Psi_{2}^{-}}\)
into any two-dimensional state via superposition-free operations \cite{Plenio-RTofS,torun2020resource}, that is, the state \eqref{eq:twolevel-maximal-negative} is maximally resourceful when \(s\in[0, 1)\). In the latter case, the state after the normalization \(\vec{x}_2^{\dagger} G \vec{x}_2 = 1\) can be written as
\begin{equation}\label{eq:twolevel-maximal-positive}
    \ket{\Psi_{2}^{+}} = \frac{1}{\sqrt{2(1+s)}}\Big(\ket{c_1} + \ket{c_2}\Big), \quad  s\in(-1, 0].
\end{equation}
By using superposition-free operations one can transform \(\ket{\Psi_{2}^{+}}\)
into any two-dimensional state with \(s\in(-1, 0]\) \cite{torun2020resource}.
It follows from Eqs.~\eqref{eq:twolevel-maximal-negative} and \eqref{eq:twolevel-maximal-positive} that for the states \(\ket{\Psi_{2}^{-}}\) and \(\ket{\Psi_{2}^{+}}\) one can immediately obtain that \(\tilde{\psi}_1=\tilde{\psi}_2=1/2\). For a given inner product value \(s\), the \(l_1\)-norm of superposition \cite{Plenio-RTofS} reaches its maximum value for the states given in Eq.~\eqref{eq:twolevel-maximal-negative} when \(s\in[0,1)\) and Eq.\eqref{eq:twolevel-maximal-positive} when \(s\in(-1, 0]\) (see Fig.~\ref{Fig1:2D}). Note that, as expected, the states \(\ket{\Psi_{2}^{-}}\) and \(\ket{\Psi_{2}^{+}}\) given in Eqs.~\eqref{eq:twolevel-maximal-negative} and \eqref{eq:twolevel-maximal-positive} reduce to two unitarily (i.e., incoherent unitary) equivalent forms of the maximal state of coherence \cite{Baumgratz-Coherence} for \(d=2\) in the orthonormal limit (i.e., \(s=0\)).

At this point, it is worth mentioning the constant trace condition introduced in Ref.~\cite{Liu-OneShotRT}.
From a geometric point of view, a maximal state \(\ket{\Psi}\) is the state that has a constant overlap with all free states, that is,
\(\text{tr}(\ket{\Psi}\bra{\Psi}\delta) = \text{const}\) \(\forall\delta \in \mathcal{F}\) \cite{Liu-OneShotRT}.
For two-dimensional systems with two pure basis states \(\ket{c_1}\) and \(\ket{c_2}\), it is easy to
see that \(\text{tr}(\ket{\Psi_{2}^{-}}\braket{\Psi_{2}^{-}}{c_i}\bra{c_i}) = (1-s)/2\) and \(\text{tr}(\ket{\Psi_{2}^{+}}\braket{\Psi_{2}^{+}}{c_i}\bra{c_i}) = (1+s)/2\) for \(i=1,2\).
In addition, a geometrical consideration of the degree of freedom in orientation of the pure basis states (see Fig.~\ref{Fig2:BlochS}) can be useful to associate two theories, coherence and superposition.

\begin{figure}[t]
  \centering
  \includegraphics[width=.6\columnwidth]{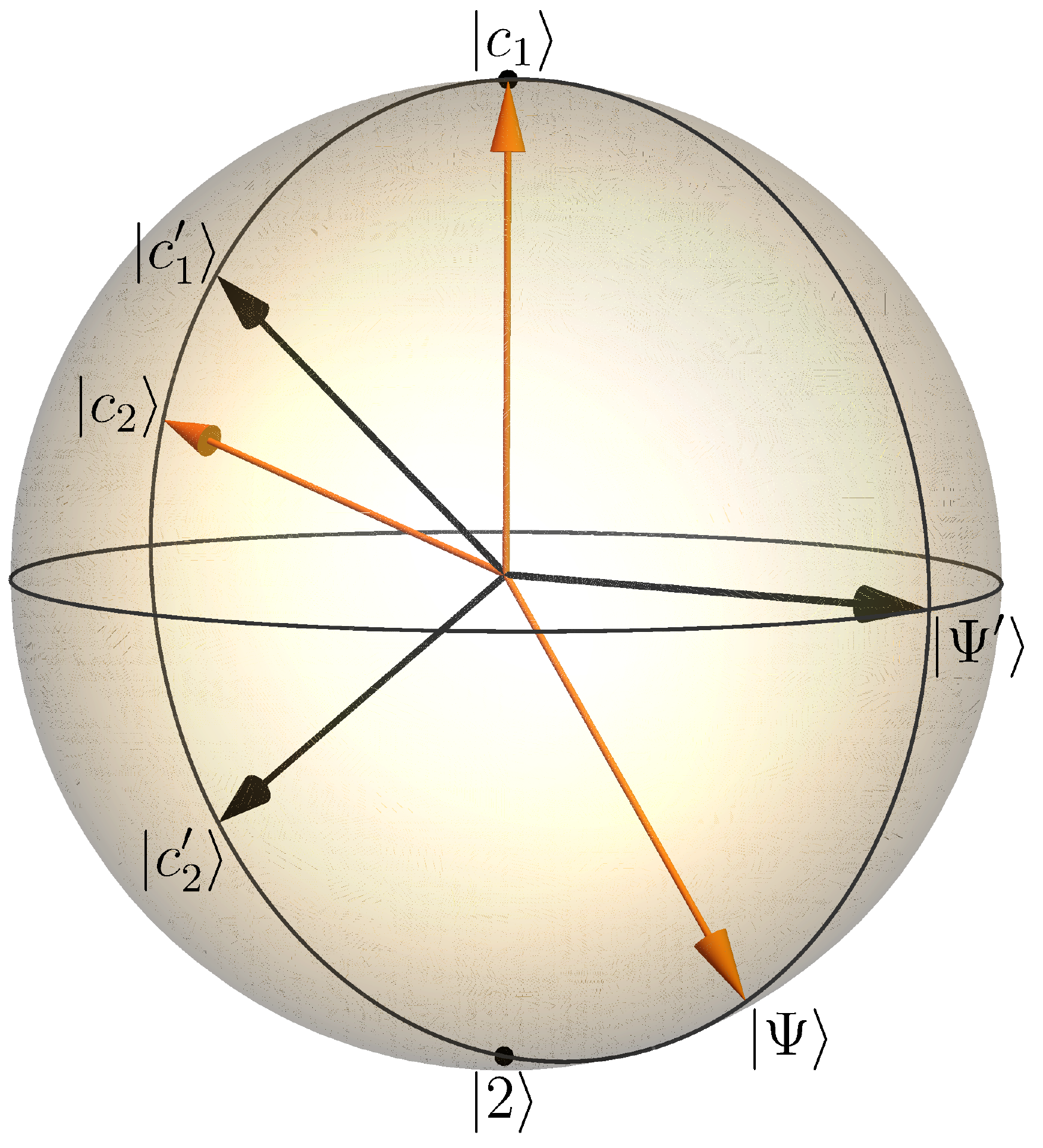}
  \caption{Representation of the degree of freedom in orientation of the pure basis states. Here, \(\ket{c_1}\) is aligned with the computational basis \(\ket{1}\) so that \(\ket{c_1} = \ket{1}\) and \(\braket{c_1}{c_2} = \braket{c^{\prime}_1}{c^{\prime}_2} = -{4}/{5}\), that is, two sets \(\{\ket{c_1}, \ket{c_2}\}\) and \(\{\ket{c^{\prime}_1}, \ket{c^{\prime}_2}\}\) correspond to the same Gram matrix. First recall that \(G = V^{\dagger}V\), where \(V\ket{i}= \ket{c_i}\). Then, for a given Gram matrix \(G\), one can have \(G = V^{\dagger}(U^{\dagger}U)V\), where \(U\) is a unitary matrix. Therefore, one can orient the basis states and hence the maximal state arbitrarily by using the unitary \(U\). In this way, the maximal state of coherence and of superposition may coincide, e.g., \(\ket{\Psi'} = \sqrt{\frac{5}{2}}\left(\ket{c^{\prime}_1}+\ket{c^{\prime}_2}\right)\) for superposition and \(\ket{\Psi'} = \frac{1}{\sqrt{2}}\left(\ket{1}+\ket{2}\right)\) for coherence.}
  \label{Fig2:BlochS}
\end{figure}

Before closing this section, we consider the inner product setting \(\braket{c_1}{c_2}=s e^{i\theta}\), where \(s\in(-1, 1)\) and \(\theta\in[0, \pi]\), which represents the most general case. Then, one can show that the Gram matrix has two eigenvalues \(\lambda_1=1-s\) and \(\lambda_2=1+s\) with the corresponding eigenvectors \( \vec{x}_1 = {(1, -e^{-i\theta})}^{\intercal} \) and  \( \vec{x}_2 = {(1, e^{-i\theta})}^{\intercal} \). Then we have
\begin{equation}\label{eq:d2Complexinner}
\ket{\Phi_{2}^{\mp}} = \frac{1}{\sqrt{2(1\mp s)}}\Big(\ket{c_1} \mp e^{-i\theta}\ket{c_2}\Big).
\end{equation}
Note that the phase of the inner product of the basis states is compensated by the relative phase of the maximal state given by Eq.~\eqref{eq:d2Complexinner}, so that \(\tilde{\psi} = {(1/2, 1/2)}^{\intercal}\). We conclude the qubit system by stating that there exists a maximal state for every inner product setting of a qubit system, where \eqref{eq:d2Complexinner} corresponds to the general case.

\subsection{Three-dimensional systems}

Importantly, the values of inner products must satisfy an inequality arising from the linear independence of the basis states \(\ket{c_1}\), \(\ket{c_2}\), and \(\ket{c_3}\), so is the determinant of the Gram matrix~\cite{torun2020resource,Horn-GramMatrix}, which is given such that
\begin{eqnarray}
1 - |s_{12}|^2 - |s_{13}|^2  - |s_{23}|^2 + s_{12} s_{13}^{*} s_{23} + s_{12}^{*} s_{13} s_{23}^{*} > 0. \ \
\end{eqnarray}
We start our analysis with real and equal inner products \(\braket{c_1}{c_2}=\braket{c_1}{c_3}=\braket{c_2}{c_3}=s\) as in qubit systems. We then have \(\det(G) = {1- 3s^2 + 2s^3 > 0}\), which implies \(s\in(-\frac{1}{2}, 1)\). One can obtain the eigensystem of such a setting as
\begin{eqnarray}
     \lambda_1 = \lambda_2 = 1-s, \quad  \lambda_3 = 1+2s \ ;
\end{eqnarray}
\begin{eqnarray}
     \vec{x}_1 = {(1, 0, -1)}^{\intercal}, \quad \vec{x}_2 = {(1, -1, 0)}^{\intercal}, \quad \vec{x}_3 = {(1,1,1)}^{\intercal}. \ \
\end{eqnarray}
In this setting, the eigenvector corresponding to the \(\lambda_{\text{min}}\) changes whether the inner product \(s\in(-\frac{1}{2}, 1)\) is positive or negative. Here it is straightforward to show that a state in the form of \(\vec{x}_3\), which can be written as
\begin{equation}\label{GU-3dim-negative}
\ket{\Psi_{3}^{+}} = \frac{1}{\sqrt{3(1+2s)}}\Big(\ket{c_1} + \ket{c_2} + \ket{c_3}\Big), \ \ s\in(-\frac{1}{2}, 0]
\end{equation}
after the normalization \(\vec{x}_3^{\dagger} G \vec{x}_3 = 1\), is a maximal state for \(s\in(-\frac{1}{2}, 0]\). In other words, the state \(\ket{\Psi_{3}^{+}}\) in Eq.~\eqref{GU-3dim-negative}
can be transformed into any three-dimensional state via superposition-free operations. Again, it is obvious that \(\tilde{\psi}_1=\tilde{\psi}_2=\tilde{\psi}_3=1/3\) for the state \eqref{GU-3dim-negative}.

\begin{table}[t]
	\caption{Form of the maximal states for particular inner product settings of basis states \(\ket{c_1}\), \(\ket{c_2}\), and \(\ket{c_3}\), where \(\braket{c_i}{c_j}=s_{ij}\). For a given inner product setting, if \(\lambda_{\min}=1+2s\), we denote maximal states by \(\ket{\Psi_{3}^{+}}\) [one form is given in Eq.~\eqref{GU-3dim-negative}], and if \(\lambda_{\min}=1-2s\), we denote maximal states by \(\ket{\Psi_{3}^{-}}\) [one form is given in Eq.~\eqref{GU-3dim-positive}]. Examples are not limited to real inner product settings; see the example given in the last row of the table.}
	\label{Table:D3GoldenUnits}
	\centering
	\begin{ruledtabular}
		\begin{tabular}{c c c}
			
			Inner product settings: \\ \(\{s_{12}, s_{13}, s_{23}\}\)   & \(\big\{\lambda_{\min}, \vec{x} \big\}\)  & Range of \(s\)  \\ [1ex]  \hline
			\(\{s,s,s\}\)  & \(\big\{1+2s, {(1, 1, 1)}^{\intercal}\big\}\) &  \(s\in(-\frac{1}{2}, 0]\)    \\
			\(\{-s, s, s\}\)   & \(\big\{1-2s, {(1, 1, -1)}^{\intercal}\big\}\) &  \(s\in[0, \frac{1}{2})\)  \\
			\(\{s, -s, s\}\)  & \(\big\{1-2s, {(1, -1, 1)}^{\intercal}\big\}\) &  \(s\in[0, \frac{1}{2})\)  \\
			\(\{s, s, -s\}\)  & \(\big\{1-2s, {(-1, 1, 1)}^{\intercal}\big\}\) &  \(s\in[0, \frac{1}{2})\)  \\
			\(\{s, -s, -s\}\)  & \(\big\{1+2s, {(1, 1, -1)}^{\intercal}\big\}\) & \(s\in(-\frac{1}{2}, 0]\)  \\
			\(\{-s, s, -s\}\) & \(\big\{1+2s, {(1, -1, 1)}^{\intercal}\big\}\) & \(s\in(-\frac{1}{2}, 0]\)  \\
			\(\{-s, -s, s\}\) & \(\big\{1+2s, {(-1, 1, 1)}^{\intercal}\big\}\) &  \(s\in(-\frac{1}{2}, 0]\)  \\
			\(\{-s, -s, -s\}\)  & \(\big\{1-2s, {(1, 1, 1)}^{\intercal}\big\}\) &  \(s\in[0, \frac{1}{2})\)  \\
			\(\{s, is, -is\}\)  & \(\big\{1-2s, {(-i, i, 1)}^{\intercal}\big\}\) &  \(s\in[0, \frac{1}{2})\) \\
		\end{tabular}
	\end{ruledtabular}
\end{table}

Next, consider the following inner product setting: \(\braket{c_1}{c_2}=-s\) and \(\braket{c_1}{c_3}=\braket{c_2}{c_3}=s\). We then have
\(\det(G) = {1- 3s^2 - 2s^3 > 0}\) which implies \(s\in(-1, \frac{1}{2})\). One can obtain the eigensystem of such a setting as
\begin{eqnarray}
	\lambda_1 = \lambda_2 = 1+s, \quad  \lambda_3 = 1-2s \ ;
\end{eqnarray}
\begin{eqnarray}
	\vec{x}_1 = {(1, 0, -1)}^{\intercal}, \quad \vec{x}_2 = {(1, -1, 0)}^{\intercal}, \quad \vec{x}_3 = {(1, 1, -1)}^{\intercal}. \quad
\end{eqnarray}
Again, the eigenvector corresponding to the \(\lambda_{\text{min}}\) changes whether the inner product \(s\in(-1, \frac{1}{2})\) is positive or negative. Here it is straightforward to show that a state in the form of \(\vec{x}_3\), which can be written as
\begin{equation}\label{GU-3dim-positive}
	\ket{\Psi_{3}^{-}} = \frac{1}{\sqrt{3(1-2s)}}\Big(\ket{c_1} + \ket{c_2} - \ket{c_3}\Big), \ \ s\in[0, \frac{1}{2})
\end{equation}
after the normalization \(\vec{x}_3^{\dagger} G \vec{x}_3 = 1\), is a maximal state for \(s\in[0, \frac{1}{2})\). In other words, the state \(\ket{\Psi_{-}}\) in Eq.~\eqref{GU-3dim-positive}
can be transformed into any three-dimensional state with \(s\in[0, \frac{1}{2})\) via superposition-free operations.
It is possible to diversify the examples in these ways (see Table \ref{Table:D3GoldenUnits}).
For a given inner product value \(s\), the \(l_1\)-norm of superposition \cite{Plenio-RTofS} reaches its maximum value for the states given in Eq.~\eqref{GU-3dim-negative} when \(s\in(-\frac{1}{2},0]\) and Eq.\eqref{GU-3dim-positive} when \(s\in[0, \frac{1}{2})\) (see Fig.~\ref{Fig3:3D}).

So far, with the results we have presented above, we have seen that maximal states exist in RTS; however, there is not a maximal state for all inner product values in general~\cite{Plenio-RTofS}. As an example, consider the case \(s=1/2\), which was also discussed in Ref.~\cite{Plenio-RTofS}. The eigensystem of the Gram matrix is obtained such that
\begin{eqnarray}
\lambda_1 = 2, \quad
\lambda_2 = \lambda_3 = \frac{1}{2};
\end{eqnarray}
\begin{eqnarray}
\vec{x}_1 = {(1, 1, 1)}^{\intercal}, \quad \vec{x}_2 = {(1, 0, -1)}^{\intercal}, \quad \vec{x}_3 = {(1, -1, 0)}^{\intercal}. \ \
\end{eqnarray}
Since \(\lambda_{\text{min}} = \lambda_2 = \lambda_3 = 1/2\) for degenerate eigenvalues, any vector in the form \(a_1 \vec{x}_2 + a_2 \vec{x}_3 = {(a_1+a_2, -a_2, -a_1)}^{\intercal}\) is an eigenvector corresponding to the minimum eigenvalue. The candidate maximal state can be written as
\begin{equation}\label{eq:threelevel-maximal-OneOverTwo}
\ket{\psi} = \mathcal{N} \Big[(a_1+a_2)\ket{c_1} - a_2\ket{c_2} - a_1\ket{c_3}\Big],
\end{equation}
where \(\mathcal{N} = {(a_1^2 + a_1 a_2 + a_2^2)}^{-1/2}\). However, one can show that a state in this form cannot satisfy \(\tilde{\psi}_i = 1/d\) for all \(i=1, 2, 3\) for any \(a_1\) and \(a_2\). Therefore, there is no maximal state for \(s=1/2\), i.e., the inner product setting \(\braket{c_1}{c_2}=\braket{c_1}{c_3} =\braket{c_2}{c_3}=1/2\) violates Proposition \ref{Prop:1}. These explicit examples show that there are maximal states for \(d \geq 3\) in general; however, for a given inner product setting there may not be a maximal state as in the case of \(s=1/2\).

\begin{figure}[t]
	\centering
	\includegraphics[width=1\columnwidth]{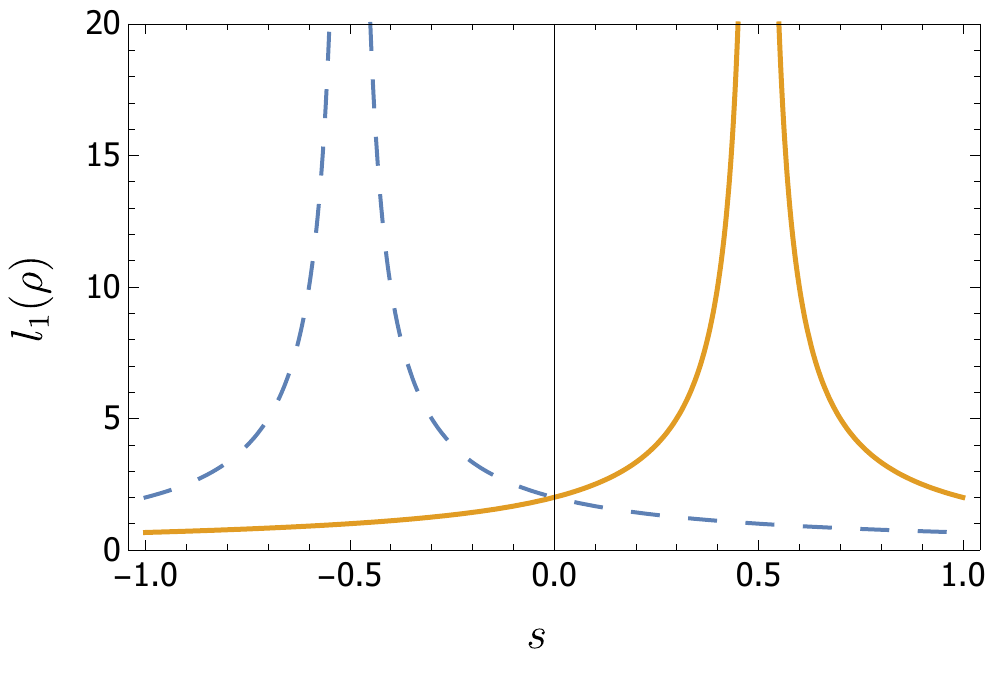}
	\caption{Plots of the \(l_1\) norm of superposition of the states \(\ket{\Psi_{3}^{+}}\) and \(\ket{\Psi_{3}^{-}}\) given in Eqs.~\eqref{GU-3dim-negative} and \eqref{GU-3dim-positive}. The \(l_1\)-norm of superposition of the state \(\ket{\Psi_{3}^{-}}\) is \(l_1(\rho_{\ket{\Psi_{3}^{-}}})={2}/{(1-2s)}\) (orange solid line) and the \(l_1\) -norm of superposition of the state \(\ket{\Psi_{3}^{+}}\) is \(l_1(\rho_{\ket{\Psi_{3}^{+}}})={2}/{(1+2s)}\) (blue dashed line). We also recall that the \(l_1\)-norm of coherence of the maximally coherent state \(\ket{\Psi_{3}}=(\ket{0}+\ket{1}+\ket{2})/\sqrt{3}\) is equal to  \({2}\) for \(d=3\), where \(l_1(\rho_{\ket{\Psi_{3}^{-}}})=l_1(\rho_{\ket{\Psi_{3}^{+}}})=2\) for \(s=0\). For any given state \(\ket{\phi}=\phi_1\ket{c_1}+\phi_2\ket{c_2}+\phi_3\ket{c_3}\) with inner products  \(\braket{c_1}{c_2}=\braket{c_1}{c_3}=\braket{c_2}{c_3}=s\) and state \(\ket{\chi}=\chi_1\ket{c_1}+\chi_2\ket{c_2}+\chi_3\ket{c_3}\) with inner products \(\braket{c_1}{c_2}=-s\) and \(\braket{c_1}{c_3}=\braket{c_2}{c_3}=s\) in dimension \(3\), we have the following results: \(l_1(\rho_{\ket{\Psi_{3}^{+}}}) > l_1(\rho_{\ket{\phi}})\) for \(s\in(-1/2, 0]\) and \(l_1(\rho_{\ket{\Psi_{3}^{-}}}) > l_1(\rho_{\ket{\chi}})\) for \(s\in[0, 1/2)\).}
	\label{Fig3:3D}
\end{figure}

Additionally, it is apparent that the Gram matrices considered above have a two-fold degenerate eigenvalue.
Although the reason for the degeneracy is due to the choice of the inner product settings,
it turns out that a Gram matrix must necessarily have degenerate eigenvalues to admit a maximal state.
Let us proceed to discuss the case \(d=3\). Since a Gram matrix is positive (semi)definite, it can be written as
\begin{equation}
  G = \sum_{i=1}^{3} \lambda_{i} \vec{x}_i\vec{x}_i^{\dagger},
\end{equation}
where \(\lambda_{i}\) are eigenvalues of \(G\) such that \(\lambda_{1} \leq \lambda_2 \leq \lambda_3\) with eigenvectors \(\vec{x_i}\). If one assumes that \(G\) admits a maximal state, then \(\vec{x}_1\) must be in the form of Eq.~\eqref{eq:maximalform-vec} up to a normalization constant. One then finds that the Gram matrix has the diagonal entries
\begin{equation}
  \begin{split}
    G_{11} = G_{22} &= \frac{1}{6}\left(2\lambda_1+3\lambda_2 + \lambda_3\right), \\
    G_{33} &= \frac{1}{3} \left(\lambda_1 + 2\lambda_3\right).
  \end{split}
\end{equation}
Since \(G_{11} = G_{22} = G_{33} = 1 \), one must then have \(\lambda_2 = \lambda_3 = (3 - \lambda_1)/2\), which implies degeneracy in the eigenvalues. Moreover, given that the eigenvalues of the Gram matrix can be expressed in terms of the minimum eigenvalue \(\lambda_1\), one can parametrize the set of Gram matrices that admit a maximal state, that is,
\begin{equation}
  \mathcal{G} = \left\{\sum_{i=1}^{3} \lambda_{i} \vec{x}_i\vec{x}_i^{\dagger} \,\, \bigg| \,\, \lambda_1 \in (0,1) \right\}.
\end{equation}

\subsection{\(d\)-dimensional systems}

As the dimension of systems increases, analysis of maximal states becomes slightly more complicated.
Here the main difficulty arises from the phase factors that are likely to emerge as a result of the inner products between basis states, where we have \(\braket{c_i}{c_j} = s_{ij} = \alpha_{ij}e^{i\beta_{ij}}\) (\(\alpha_{ij} \in \mathds{R}^{\geq 0} \) and \(\beta_{ij} \in [0, 2\pi)\)) in general. On the other hand, when we have inner product settings such that the inner products do not equal each other up to a phase factor, that is, \(\abs{s_{ij}} \neq s\) for all \(i \neq j\), then the given inner product setting does not admit a maximal state in general. Clearly, the case where the inner products between basis states have equal and real values, that is, \(\alpha_{ij}=s\) and \(\beta_{ij}=0\), is the most elementary one to figure out. Here we focus on this case and have the following proposition.

\begin{proposition}\label{Prop:3}
The state
\begin{equation}\label{GU:General-negative}
\ket{\Psi_{d}^{+}}:=\frac{1}{\sqrt{d \left(1 +(d-1)s\right)}} \sum_{j=1}^{d} \ket{c_j}, \quad s\in(\frac{1}{1-d}, 0],
\end{equation}
is a maximal state for an  equal and real inner product setting, that is, 
\(\braket{c_i}{c_j} = s, \, s \in \mathds{R}\), for \(i \neq j\).
\end{proposition}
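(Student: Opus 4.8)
\emph{Plan.} The strategy is to first pin down the candidate through the necessary conditions of Proposition~\ref{Prop:1}, and then carry out the real work — sufficiency — by explicitly constructing a trace-preserving superposition-free map that converts $\ket{\Psi_d^+}$ into an arbitrary target state. For the equal-inner-product setting the Gram matrix is $G = (1-s)\mathds{1} + s\,J$, where $J$ denotes the $d\times d$ all-ones matrix. Its spectrum is $1+(d-1)s$, with eigenvector $\vec{x}=(1,\dots,1)^{\intercal}$, together with $1-s$ of multiplicity $d-1$. For $s\in(\tfrac{1}{1-d},0]$ one has $1-s\geq 1\geq 1+(d-1)s>0$, so $\lambda_{\min}=1+(d-1)s$ and its eigenvector is $\vec{x}$, while positive-definiteness fixes the lower endpoint $\tfrac{1}{1-d}$. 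Normalizing $\vec{x}$ via $\vec{x}^{\dagger}G\vec{x}=1$ reproduces exactly Eq.~\eqref{GU:General-negative}, and since all components of $\vec{x}$ coincide one immediately checks $\tilde{\psi}_i=1/d$. Thus $\ket{\Psi_d^+}$ satisfies both parts of Proposition~\ref{Prop:1}; only sufficiency remains.

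For sufficiency I would fix an arbitrary normalized target $\vec{\phi}=(e^{i\gamma_1}\phi_1,\dots,e^{i\gamma_d}\phi_d)^{\intercal}$ and use the map $\Phi(\cdot)$ of Eq.~\eqref{eq:supfreemap-implicit}. The $d!$ operators of $S_1$ are taken in the form \eqref{eq:dlevel-rankd-kraus} with $p_n=1/d!$, so that $\tilde{K}_n\vec{\psi}=\sqrt{p_n}\,\vec{\phi}$. Because the maximal state has equal magnitudes and zero relative phases, the entries of $\mathds{K}=\sum_n \tilde{K}_n^{\dagger}G\tilde{K}_n$ given in Eqs.~\eqref{KnSecondSet11} and \eqref{KnSecondSet12} collapse to a constant diagonal value and a constant off-diagonal value. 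Writing each $S_2$ operator through its single nonzero row $v^{(m)}$ (so that $\tilde{F}_m\vec{\psi}=0$ forces $\sum_l v^{(m)}_l=0$) and collecting these rows into a matrix $M$ with $M_{mi}=v^{(m)}_i$, the trace-preserving condition \eqref{eq:tracecon} reduces to the pair $M^{\dagger}M=(t-u)\mathds{1}+u\,J$ and $M\,(1,\dots,1)^{\intercal}=0$, where $t$ and $u$ are the diagonal and off-diagonal shortfalls that the $S_2$ operators must supply, namely $t=1-\lambda_{\min}\sum_j\phi_j^2$ and $u=s-\mathds{K}_{i\neq l}$.

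The final step is to show this system is solvable. The diagonal equation \eqref{psii-phij-elements} gives $t=1-\lambda_{\min}\sum_j\phi_j^2\geq 0$, which is guaranteed precisely by the Rayleigh-quotient bound \eqref{eq:quadratic-identity} and the fact that $\lambda_{\min}$ is the \emph{minimum} eigenvalue. Imposing the target normalization $\vec{\phi}^{\dagger}G\vec{\phi}=1$ then yields the identity $t=(1-d)u$, which is exactly the compatibility ensuring that $(1,\dots,1)^{\intercal}$ lies in the kernel of $M^{\dagger}M$. Consequently $M^{\dagger}M=(t-u)\,\Pi$ with $\Pi=\mathds{1}-\tfrac{1}{d}J$ the projector onto the orthogonal complement of $(1,\dots,1)^{\intercal}$, and $t-u=\tfrac{d}{d-1}\,t\geq 0$; one may therefore take $M=\sqrt{t-u}\,\Pi$, i.e. $(\tilde{F}_m)_{mi}=\sqrt{t-u}\,(\delta_{mi}-1/d)$. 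This produces an explicit valid superposition-free $\Phi$ sending $\ket{\Psi_d^+}$ to the arbitrary $\ket{\phi}$, establishing maximality and, as the $d=2$ specialization, Proposition~\ref{Prop:2}.

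I expect the main obstacle to be the off-diagonal bookkeeping: one must verify that a \emph{single} scalar $u$ simultaneously fits every off-diagonal entry and is compatible both with the kernel constraint $M\,(1,\dots,1)^{\intercal}=0$ and with positive-semidefiniteness of $M^{\dagger}M$. This closes only because the equal-inner-product structure makes the off-diagonal part of $G$ constant while the equal-component maximal state makes $\mathds{K}$ constant, with the target normalization forcing the consistency $t=(1-d)u$ and the Rayleigh bound forcing $t-u\geq 0$.
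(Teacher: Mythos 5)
Your proof is correct and runs on the same machinery as the paper's --- the same $d!$ operators of $S_1$ with $p_n=1/d!$, the same entries of $\mathds{K}$ from Eqs.~\eqref{KnSecondSet11} and \eqref{KnSecondSet12}, and the same Rayleigh-quotient bound \eqref{eq:quadratic-identity} guaranteeing $t=1-\lambda_{\min}\phi^2\geq 0$ --- but it closes the sufficiency step by a genuinely different route. The paper never constructs the $S_2$ operators: it verifies $(G-\mathds{K})\vec{\psi}=0$, proves $G-\mathds{K}\geq 0$ by a diagonal-dominance argument, and delegates the existence of a decomposition $G-\mathds{K}=\sum_m\tilde{F}_m^{\dagger}G\tilde{F}_m$ with $\tilde{F}_m\vec{\psi}=0$ to Theorem 7 of Ref.~\cite{Plenio-RTofS}. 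You instead observe that $G-\mathds{K}=(t-u)\big(\mathds{1}-\tfrac{1}{d}J\big)$ is a non-negative multiple of the projector onto the orthogonal complement of the all-ones vector --- your compatibility identity $t=(1-d)u$ is exactly the paper's relation $(G-\mathds{K})_{il}=(G-\mathds{K})_{ii}/(1-d)$ in Eq.~\eqref{eq:dlevelX-diagonal-real} --- and you read off $(\tilde{F}_m)_{mi}=\sqrt{t-u}\,(\delta_{mi}-1/d)$ explicitly; this makes positive semidefiniteness and the kernel condition manifest in one stroke and yields a fully explicit channel. What you give up is flexibility: the projector form is tied to the equal-real-inner-product symmetry and the phase-free maximal state, whereas the paper's diagonal-dominance argument is the one it reuses verbatim for the complex qubit setting and the non-constant settings of Table~\ref{Table:D3GoldenUnits}, so your claim that Proposition~\ref{Prop:2} follows as the $d=2$ specialization holds only for real $s$ unless you replace $(1,\dots,1)^{\intercal}$ by the phase-adjusted eigenvector. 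Two small omissions worth adding: a remark that each single-row $\tilde{F}_m$ is indeed of the superposition-free form \eqref{Kraus-free-n} (at most one nonzero entry per column), and the paper's final step that mixed targets $\sigma=\sum_k p_k\ket{\phi_k}\bra{\phi_k}$ are reached by the convex combinations $\sum_k p_k\Phi_k(\cdot)$.
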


Before we provide the proof of Proposition \ref{Prop:3}, we recall the following. Let \(A\) be a \(d \cross d\) matrix whose entries \(a_{ij} \in \mathds{C}\). If \(\abs{a_{ii}} \geq \sum_{i \neq j} \abs{a_{ij}}\) for all \(i=1, 2, \dots, d\), then \(A\) is called a diagonally dominant matrix. If \(A\) is also Hermitian with real non-negative diagonal entries , then the matrix \(A\) is positive semidefinite  \cite{Horn-GramMatrix,VargaMatrixIterative2000}.

\begin{proof}
We start the proof by showing that the superposition-free Kraus operators in the set \(S_1\) defined in Eq.~\eqref{eq:dlevel-rankd-kraus} and a set of additional superposition-free Kraus operators constitute a trace-preserving operation. The equation \(G - \sum_{n=1}^{d!}\tilde{K}_n^{\dagger} G \tilde{K}_n \geq 0\)
must be positive semidefinite so that one can decompose it into positive semidefinite operators in the form of \(\sum_{m=1}^{d}\tilde{F}_m^{\dagger} G \tilde{F}_m\) and hence \(\sum_{n=1}^{d!}\tilde{K}_n^{\dagger} G \tilde{K}_n+\sum_{m=1}^{d}\tilde{F}_m^{\dagger} G \tilde{F}_m=G\). Since \(\tilde{F}_m\vec{\psi} = 0\) for all \(m\), one also requires
\begin{equation}\label{dlevel-zero-eigenvalue}
    \Big(G - \sum_{n=1}^{d!}\tilde{K}_n^{\dagger} G \tilde{K}_n\Big)\vec{\psi} = 0.
\end{equation}
Now assume that \(G \vec{\psi} = \lambda_{\text{min}}\vec{\psi}\) and \(\tilde{\psi}={(1/d, \dots, 1/d)}^{\intercal}\).
To see that Eq.~\eqref{dlevel-zero-eigenvalue} holds, consider the following:
\begin{equation}\label{G-Y-firsteq}
    \begin{split}
    {\Big((G-\mathds{K})\vec{\psi}\Big)}_{i} &= \sum_{l=1}^{d} G_{il} e^{i\theta_l}\psi_l - \bigg[\frac{1}{d}\frac{\phi^2}{\psi_i^2}\sum_{l=1}^{d} e^{i\theta_l}\psi_l\delta_{il} \\ &+ \frac{(d-2)!}{d!}(1 - \phi^{2})\sum_{l=1}^{d} \frac{e^{i\theta_{i-l}}}{\psi_i\psi_l} e^{i\theta_l}\psi_l (1 - \delta_{il}) \bigg],
    \end{split}
\end{equation}
where \( \phi^2 \coloneqq \sum_{j=1}^{d}\phi_j^2\) and the matrix \(\mathds{K}\) is given by Eqs.~\eqref{KnSecondSet11} and \eqref{KnSecondSet12}. From Eq.~\eqref{eq:maximalform-vec} we have that \({\psi_i}^2 = {1}/{(d \lambda_{\min})}\) and then Eq.~\eqref{G-Y-firsteq} becomes
\begin{equation}
    \begin{split}
    {\Big((G-\mathds{K})\vec{\psi}\Big)}_{i} &= \sum_{l=1}^{d} G_{il} e^{i\theta_l}\psi_l - \lambda_{\text{min}}e^{i\theta_i}\psi_i.
    \end{split}
\end{equation}
From \(G \vec{\psi} = \lambda_{\text{min}}\vec{\psi}\) we have \(\sum_{l=1}^{d} G_{il} e^{i\theta_l}\psi_l = \lambda_{\text{min}}e^{i\theta_i}\psi_i\) and hence \((G-\mathds{K})\vec{\psi} = 0\), which proves Eq.~\eqref{dlevel-zero-eigenvalue}.

We now show that \(G-\mathds{K}\) is positive semidefinite. By combining Eq.~\eqref{KnSecondSet11} and \({\psi_i}^2 = {1}/{(d \lambda_{\min})}\), the diagonal entries of the matrix \(G-Y\) are written such that
\begin{equation}\label{eq:dlevelX-diagonal}
    \big(G-\mathds{K}\big)_{ii} = 1 - \lambda_{\text{min}} \phi^2,
\end{equation}
since \(G_{ii} = 1\) for all \(i\). Looking at Eq.~\eqref{eq:quadratic-identity}, it is easy to see that Eq.~\eqref{eq:dlevelX-diagonal} is non-negative. Also, the matrix \(G-\mathds{K}\) has the following off-diagonal entries:
\begin{equation}\label{eq:dlevelX-offdiagonal}
    \big(G-\mathds{K}\big)_{il} = G_{il} - \frac{\lambda_{\text{min}}}{d-1}\left(1-\phi^2\right)e^{i\theta_{i-l}}.
\end{equation}

First, we consider qubit systems with \(\braket{c_1}{c_2} = se^{i\theta}\), where the minimum eigenvalue of
\(G\) is \(1-s\) when \(s \in [0,1)\) and \(1+s\) when \(s \in (-1,0]\). In the case where \(\lambda_{\text{min}} = 1-s\), Eqs.~\eqref{eq:dlevelX-diagonal} and \eqref{eq:dlevelX-offdiagonal} provide \(\big(G-\mathds{K}\big)_{11} = \big(G-\mathds{K}\big)_{22} = 1 - (1-s)\phi^2\) and \(\big(G-\mathds{K}\big)_{12} = \big(G-\mathds{K}\big)_{21}^{*} = \big(1-(1-s)\phi^2\big)e^{i\theta}\), respectively. It is obvious that \(G-\mathds{K}\) is a diagonally dominant matrix. Since \(G-\mathds{K}\) is also Hermitian with positive diagonal entries, one concludes that it is positive semidefinite. It is straightforward to show that the resultant matrix is positive semidefinite also for the case \(\lambda_{\text{min}} = 1+s\).

The same approach can be used to prove that the states given in Table \ref{Table:D3GoldenUnits} are maximal. Here, for instance, we provide a proof for the case \(\{s_{12}, s_{13}, s_{23}\}=\{s, -is, is\}\) with \(s\in [0,\frac{1}{2})\). In this case, we have \(\big(G-\mathds{K}\big)_{ii} = 1 - (1-2s)\phi^2\) for \(i=1,2,3\) and \(\abs{\big(G-\mathds{K}\big)_{il}} =
{(1-(1-2s)\phi^2)/2}\) for \(i,l=1,2,3\) with \(i\neq l\). By using the same reasoning as above for the qubit case (diagonally dominant matrix and Hermitian matrix), one concludes that \(G-\mathds{K}\) is a positive-semidefinite matrix.

Second, we consider \(d\geq 3\) level systems with a real and equal inner product setting. The Gram matrix then can be written as
\begin{equation}\label{eq:grammatrix-realcase-eigs}
    G = (1-s)\mathds{1} + s \vec{\psi}\vec{\psi}^{\intercal},
\end{equation}
where \(\mathds{1}\) is a \(d \times d\) identity matrix and \(\vec{\psi}\) is a column vector whose elements are all one. It is clear that \(\vec{\psi}\) is an eigenvector corresponding to the eigenvalue \(1 + s(d-1)\) with multiplicity one and any vector orthogonal to \(\vec{\psi}\) is an eigenvector corresponding to the eigenvalue \(1-s\) with multiplicity \(d-1\). Then we have that \(\lambda_{\text{min}} = 1 + s(d-1)\) for the interval \(1/(1-d) < s \leq 0\) (in accordance with the assumption \(G \vec{\psi} = \lambda_{\text{min}}\vec{\psi}\)). By combining these with Eqs.~\eqref{eq:dlevelX-diagonal} and \eqref{eq:dlevelX-offdiagonal}, we obtain
\begin{eqnarray}\label{eq:dlevelX-diagonal-real}
\begin{split}
\big(G-\mathds{K}\big)_{ii} &= 1 - \phi^2\big(1+s(d-1)\big),\\
\big(G-\mathds{K}\big)_{il} &= \frac{\big(G-\mathds{K}\big)_{ii}}{1-d}.
\end{split}
\end{eqnarray}
Note that \(\abs{\big(G-\mathds{K}\big)_{ii}} = \sum_{l} \abs{\big(G-\mathds{K}\big)_{il}} \) for all \(l \neq i\) and \(i=1, 2, \dots, d\). Hence, using the dominant diagonal property of the matrix, we can conclude that it is positive semidefinite.

Moreover, an arbitrary mixed superposition state can be written as \(\sigma = \sum_i p_i \ket{\phi_i}\bra{\phi_i}\),
where \(p_i \geq 0\) forms a probability distribution. Then, for an inner product setting supported by Proposition \ref{Prop:2} and Proposition \ref{Prop:3}, one can implement the following superposition-free operation:
\(\Phi_i(\ketbra{\psi}) = \ket{\phi_i}\bra{\phi_i}\), where \(\ket{\psi}\) is the maximal state of the corresponding inner
product setting. Then, it is clear that \(\sum_i p_i \Phi_i(\ketbra{\psi}) = \sigma\). This completes the proof, since one does not need to built operators in the second set \(S_2\) explicitly (Theorem 7 of Ref.~\cite{Plenio-RTofS}). Although that is the case, the entries of the operators in the second set can be determined by following the steps described in the Supplemental Material of Ref.~\cite{Plenio-RTofS}.
\end{proof}

Finally, the constant trace condition introduced in Ref.~\cite{Liu-OneShotRT} can be easily generalized for dimension \(d\).
For \(d\)-dimensional systems with \(d\) pure basis states \(\{\ket{c_1}, \ket{c_2}, \dots, \ket{c_d}\}\), one can obtain
\(\text{tr}({\hat{\Psi}_{d}^{+}}\ket{c_i}\bra{c_i}) = \lambda_{\min}/d\) for \(i=1,2,\dots,d\). Here we write \({\hat{\Psi}_{d}^{+}}\) as a shorthand for \(\ket{\Psi_{d}^{+}}\bra{\Psi_{d}^{+}}\).
Note that for the state given in Eq.~\eqref{GU:General-negative}, \(\lambda_{\min}=1+(d-1)s\) for \(s \in (\frac{1}{1-d},0]\). This ensures \(\text{tr}({\hat{\Psi}_{d}^{+}}\delta) = \text{const}\), \(\forall\delta \in \mathcal{F}\) due to the linearity of the trace function.

\section{Conclusions}\label{Sec:Conc}

In this work we developed a structural way to investigate maximal states of the resource theory of superposition \cite{Plenio-RTofS}, which is a generalization of the resource theory of coherence \cite{Baumgratz-Coherence}. For this purpose, we used the Gram matrix to represent different sets of linearly independent (nonorthogonal) basis states. We provided a necessary condition for the existence of a maximal state for a set of arbitrary basis states utilizing the eigenvalues and eigenvectors of the corresponding Gram matrix. An immediate corollary of this condition provides the form of (candidate) maximal states which reduces to the form of the maximal state of the resource theory of coherence in the orthonormal limit, i.e., \(G\) (Gram matrix) \(\rightarrow\) \(\mathds{1}\) (identity).

In light of the proposed criterion, we first investigated the maximal states of qubit systems. Our findings show that every possible linearly independent basis states of a qubit system admits a maximal state, which unifies the results of Refs.~\cite{Plenio-RTofS,torun2020resource}. Then, we discussed three-dimensional systems in detail along with \(d\)-dimensional systems.
For \(d \geq 3\), the existence of a maximal state depends on the inner products (overlaps) of basis states; hence, only a subset of all possible basis states admits a maximal state. To have seen that the Gram matrix has such an effective feature shows that it may have an indispensable role in other kinds of connections between coherence \cite{Baumgratz-Coherence} and superposition \cite{Plenio-RTofS}.

Future studies may benefit from our work in various contexts. For instance, a further generalization of coherence distillation \cite{2018Regula-OneShotCohDis,2019Torunn-CohDis,2019Liu-DetCoh,2020RegulaOneShotDis,Pang2020} to the superposition of linearly independent nonorthogonal basis states would be a worthwhile direction for future investigation, which is particularly important for quantum cryptographic protocols. It would be interesting to explore whether the Gram matrix has a convenient function to adapt such works to the case of superposition of nonorthogonal basis states.

\begin{acknowledgments}
We are grateful to Ali Yildiz and Bartosz Regula for helpful discussions and comments. G.T. acknowledges support from the Scientific and Technological Research Council of Turkey (TUBITAK) (Grant No. 120F089).
\end{acknowledgments}


%

\end{document}